\DeclarePairedDelimiter\floor{\lfloor}{\rfloor}
\DeclarePairedDelimiter\ceil{\lceil}{\rceil}
 \newtheorem{lem}{Lemma}[section]
 \newtheorem{thm}{Theorem}[section]
 \newtheorem{coro}{Corollary}[section]
 \newtheorem{defn}{Definition}[section]
 \newtheorem{rem}{Remark}[section]
\title{\LARGE \bf
Finite-sample analysis of identification of switched linear systems with arbitrary or restricted switching*
}
\author{Shengling Shi$^{1}$, Othmane Mazhar$^{2}$ and Bart {De Schutter}$^{1}$
\thanks{*This research has received funding from the
European Research Council (ERC) under the European Union’s Horizon
2020 research and innovation programme (Grant agreement No.101018826 - CLariNet).\newline \indent   \copyright  2022 IEEE. Personal use of this material is permitted.  Permission from IEEE must be obtained for all other uses, in any current or future media, including reprinting/republishing this material for advertising or promotional purposes, creating new collective works, for resale or redistribution to servers or lists, or reuse of any copyrighted component of this work in other works.}
\thanks{$^{1}$Shengling Shi and Bart De Schutter are with the Delft Center for Systems and Control, Delft University of Technology, the Netherlands
        {\tt\small \{s.shi-3, b.deschutter\}@tudelft.nl}}%
\thanks{$^{2}$Othmane Mazhar is with the Department of Mathematics, KTH Royal Institute of Technology, Sweden
        {\tt\small othmane@kth.se}}%
 }
\begin{document}

\maketitle
\thispagestyle{empty}
\pagestyle{empty}

\begin{abstract}
For the identification of switched systems with a measured switching signal, this work aims to analyze the effect of switching strategies on the estimation error. The data for identification is assumed to be collected from globally asymptotically or marginally stable switched systems under switches that are arbitrary or subject to an average dwell time constraint. Then the switched system is estimated by the least-squares (LS) estimator. To capture the effect of the parameters of the switching strategies on the LS estimation error, finite-sample error bounds are developed in this work. The obtained error bounds show that the estimation error is logarithmic of the switching parameters when there are only stable modes; however, when there are unstable modes, the estimation error bound can increase linearly as the switching parameter changes. This suggests that in the presence of unstable modes, the switching strategy should be properly designed to avoid the significant increase of the estimation error.
\end{abstract}

\section{INTRODUCTION}
The finite-sample error analysis of identification methods has recently received considerable attention \cite{matni2019self}. When the estimated model is used for controller design, the obtained error bound is important in understanding the effect of the estimation error on the control performance \cite{matni2019self}. While several works consider the finite-sample analysis of linear system identification \cite{simchowitz2018learning,sarkar2019near,jedra2020finite,
djehiche2021non,sarkar2021finite,oymak2021revisiting}, the finite-sample analysis for identifying hybrid systems has rarely been addressed \cite{lauer2019hybrid}. In this work, we consider the identification of a particular class of hybrid systems, i.e., switched linear systems (which consist of multiple linear systems corresponding to different modes). 

A practical setting is when the switching signal of a switched system is measured, e.g., servo turntable systems \cite{zhang2016switched} and power systems with the switching signal as an input for stabilization \cite{chen2014application}. In this setting, the statistical analysis of the identification problem has been considered in \cite{sarkar2019nonparametric,sattar2021identification,sayedana2022switch}. The authors of \cite{sarkar2019nonparametric} consider a more general setup where the outputs are measured instead of the states, and a subspace identification method is employed. However, the estimation procedure requires collecting data from multiple independent trajectories obtained by restarting the system. When both the states and the switching signal are measured, the statistical analysis of the switched LS estimator, i.e., applying the standard LS estimator for every mode separately, has been addressed recently in \cite{sattar2021identification,sayedana2022switch}. The extension of the analysis from the standard LS to switched LS is non-trivial, as the covariances of the local estimators are coupled through the system dynamics \cite{sarkar2021finite}. 

The authors in \cite{sayedana2022switch} have established the
consistency of the switched LS estimator; however, the result is asymptotic and thus valid only when the data length approaches infinity. The work \cite{sattar2021identification} addresses the finite-sample analysis of the switched LS estimator. The employed estimator requires sub-sampling the data and knowing the noise covariance, typically an unknown quantity. Furthermore, all the above three works model the switching signal as a stochastic process, i.e., an i.i.d. process or a Markov chain. This model may not be suitable for some situations, e.g., when the switching is an external input or caused by state-space partition.

This work aims to derive a finite-sample estimation error bound for the LS estimation of switched linear systems from the measured states and the measured switching signal. We consider the typical classes of deterministic switching signals, including arbitrary switching and switching with an average dwell time constraint \cite{lin2009stability}. Under the considered classes of switching signals, we assume the nominal switched system to be globally marginally or asymptotically stable.

A preliminary estimation error bound is first obtained by extending the results in \cite{sarkar2019near} for linear systems; however, the resulting bound contains the Gramian of the switched system, which depends on the measured switching sequence. To obtain an error bound that generally holds, independently of any realization of the switching signal, and that captures the effect of the parameters of the switching strategies, data-independent bounds on the spectrum of the Gramian are developed and then combined with the preliminary bound. In summary, the contributions of this work are as follows:
\begin{itemize}
    \item The existing data-independent finite-sample error bound in \cite{sarkar2019near} for linear system identification is extended to switched systems.
    
    \item Data-independent bounds for the spectrum of the Gramian are developed for the switched system.
\end{itemize}

\subsection*{Notation}
Given any positive integer $k$, $[k]$ denotes the set $\{1,2,\dots,k\}$. Given any real matrix $M$, $\rho(M)$ denotes its spectral radius, $\sigma_{\mathrm{max}}(M)$ denotes its maximum singular value, $\sigma_{\mathrm{min}}(M)$ denotes the minimum singular value, and $M^\dagger$ denotes its pseudoinverse. Given two symmetric matrices $M$ and $H$, $M \leqslant H$ and $M< H$ means that $H- M$ is positive semi-definite and positive definite, respectively. Given any real number $c$, $\floor{c}$ and $\ceil{c}$ denote the floor function and the ceiling function, respectively. For the summation $\sum_{j=k_0}^{k_1} a_j$ of any sequence with some integers $k_0$, $k_1$, we define $\sum_{j=k_0}^{k_1} a_j \triangleq 0$ if $k_1<k_0$. 

\section{Problem formulation}
Consider the discrete-time switched linear system:
\begin{align}
    x_{t+1} & = A_{w_t} x_t + e_t,  \label{eq:system}
\end{align}
where $t \in \mathbb{Z}^+$ is the time step, $x_t \in \mathbb{R}^n$ is the state vector, $w_t$ is the deterministic switching signal satisfying $w_t \in [s]$ with $s$ a positive integer, and $e_t$ is a sub-Gaussian distributed white noise vector with variance proxy $1$ and $\mathbb{E}(e_t e_t^\top) =I$. The (nominal) system is said to be (globally) marginally stable\footnote{While different definitions for marginal stability exist, we follow the notion in \cite{morris2021marginally}.} if, when the noise is absent, there exists a $b$ such that $\|x_t\|_2 \leqslant b \|x_0\|_2$ for any $t$ and any $x_0$. Similarly, (nominal) global asymptotic stability is defined when the noise is absent.

We consider the LS estimation of the switched system given the measurements in $\{(x_t, w_t)\}_{t=1}^N$, when the switched system is (globally) asymptotically or marginally stable. For simplicity, we assume $x_0=0$ for the data collection. For any mode $i \in [s]$, let $\mathcal{T}_i \subseteq [N-1]$ denote the subset of time steps when mode $i$ is active, i.e., $w_{t} = i$ for all $t \in \mathcal{T}_i $, and $N_i \in \mathcal{T}_i$ denotes the last time step when mode $i$ is active.

Then the  LS estimator for mode $i$ is 
\begin{align}
\hat{A}_i &= \arg\min_{A_i} \sum_{t \in \mathcal{T}_i} \|x_{t+1} - A_{i} x_{t}\|_2^2, \nonumber \\
& = \Big(  \sum_{t \in \mathcal{T}_i} x_{t+1} x_{t}^\top   \Big) \Big( \sum_{t \in \mathcal{T}_i} x_{t}  x_{t}^\top \Big)^\dagger. \label{eq:LS}
\end{align}
Equation~\eqref{eq:LS} leads to \begin{equation} \label{eq:LSerror}
\hat{A}_i - A_i = S_i X_i^\dagger,    
\end{equation}
where
$
S_i \triangleq \sum_{t \in \mathcal{T}_i} e_{t} x_{t}^\top, \text{ }  X_i \triangleq    \sum_{t \in \mathcal{T}_i} x_{t} x_{t}^\top.
$
Therefore, the main goal is to derive a high-probability error bound for $\|S_i X_i^\dagger\|_2$. In addition, we focus on developing an error bound that is data-independent and captures the dependence on the parameters of the switching strategies. Data-independent bounds are more theoretically informative than data-dependent ones as they reveal how they scale with the properties of the unknown system and the parameters of the switching signal \cite{matni2019self}. They also provide the worst-case guarantees as they hold for any realization of the data.

Before we address the above problem, let us define some notations. For any time step $t$, we adopt a shorthand notation:
$
A_{(t)} \triangleq A_{w_t}.
$
For any two positive integers $j \geqslant k$, we define
$
A_{(j:k)} \triangleq A_{(j)} A_{(j-1)} \dots A_{(k)}
$
and $A_{(j:k)} \triangleq I$ when $j<k$. Then for any $t$, we have 
\begin{equation}
x_t = A_{(t-1:0)} x_0+ \sum_{i=0}^{t-1} A_{(t-1:t-i)}e_{t-1-i}. \label{eq:StateEvolution}
\end{equation}
We define the Gramian of the system:
\begin{equation}
\Gamma_t \triangleq \sum_{i=0}^{t-1} A_{(t-1:t-i)} A_{(t-1:t-i)}^\top, \label{eq:GammaDef}
\end{equation}
and it can be found that $\mathbb{E}(x_t x_t^\top) = \Gamma_t$. 
\begin{rem}
The results in this work remain valid if $e_t$ is replaced by a more general noise source $\eta_t = \sigma_e e_t$ in \eqref{eq:system}, which has $\mathbb{E}(e_t e_t^\top) =\sigma_e^2 I $. This new noise will lead to the same LS estimate \eqref{eq:LS} due to the cancellation of $\sigma_e$ in \eqref{eq:LS}.
\end{rem}

\section{Preliminary error bound}
In this section, we obtain a preliminary high-probability bound for $\|A_i - \hat{A}_i\|_2$ by applying the result for linear system identification. In particular, we start from the finite-sample bound in \cite{sarkar2019near}. While there are other bounds available for linear systems, the one in \cite{jedra2020finite} is derived for asymptotic stable system only, and the one in \cite{simchowitz2018learning} has an additional parameter introduced by the analytical method.

Since the switching signal is deterministic, the bound in \cite{sarkar2019near} for linear systems extend to the estimation error of this work with the difference that the Gramian for linear systems is replaced by the Gramian in \eqref{eq:GammaDef}. To this end, we first define
$
T^{(0)} \triangleq  K \big(n+ \ln \frac{2}{\delta} \big), 
$
where $K$ is some positive constant. Then following \cite[Sections 4, 9]{sarkar2019near} analogously, the following high-probability error bound for one mode can be obtained.

\begin{thm} \label{thm:MainThm}
For any $\delta \in (0,1/4)$, if $n \geqslant 2$ and $|\mathcal{T}_i | \geqslant $
\begin{equation} \label{eq:Inequality}
\max\{ T^{(0)}, \text{ } 64n \ln\big(   \mathrm{tr}( \Gamma_{ N_i} -I ) + 1\big) + 128n \ln \Big(  \frac{5}{ \delta}\Big) \},
\end{equation}
then with probability at least $1-4\delta$, we have $\|\hat{A}_i - A_i \|_2 \leqslant $
\begin{equation} \label{eq:PreBound2}
\frac{1}{\sqrt{|\mathcal{T}_i |}} \sqrt{32n \Big[\frac{1}{2} \ln\big(4 \mathrm{tr}(\Gamma_{N_i})+1 \big) + \ln \Big( \frac{5}{ \delta }\Big) \Big] }.
\end{equation}
\end{thm}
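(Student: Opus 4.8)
The plan is to run the least-squares error analysis of \cite{sarkar2019near} for linear systems while tracking the modifications forced by switching. The starting point is \eqref{eq:LSerror}: it suffices to bound $\|S_iX_i^\dagger\|_2$ with high probability, where $S_i=\sum_{t\in\mathcal{T}_i}e_tx_t^\top$ and $X_i=\sum_{t\in\mathcal{T}_i}x_tx_t^\top$. Put $\mathcal{F}_t=\sigma(e_0,\dots,e_{t-1})$; since $\{w_t\}$ is deterministic, $x_t$ is $\mathcal{F}_t$-measurable by \eqref{eq:StateEvolution}, $e_t$ is conditionally mean-zero and $1$-sub-Gaussian given $\mathcal{F}_t$ with conditional covariance $I$, and $S_i$ is a sum of matrix-martingale differences with predictable quadratic variation $X_i$. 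Consequently every ingredient of the linear-system proof applies once the powers $A^k$ are replaced by the products $A_{(j:k)}$ and the Lyapunov Gramian by \eqref{eq:GammaDef}. Fix the deterministic regulariser $V=c_0|\mathcal{T}_i|\,I$ with a small absolute constant $c_0$; since $\Gamma_t\succeq I$ for $t\ge1$ we have $V\preceq\mathbb{E}[X_i]=\sum_{t\in\mathcal{T}_i}\Gamma_t$. On the event $\{X_i\succeq V\}$ one has $X_i+V\preceq2X_i$ and hence $\|(X_i+V)^{1/2}X_i^{-1/2}\|_2\le\sqrt2$, so
\[
\|\hat A_i-A_i\|_2^2=\|S_iX_i^\dagger\|_2^2\le\frac{2}{\lambda_{\min}(X_i)}\bigl\|S_i(X_i+V)^{-1/2}\bigr\|_2^2 ,
\]
which splits the problem into an \emph{excitation} estimate ($\lambda_{\min}(X_i)\gtrsim|\mathcal{T}_i|$, which also certifies $X_i\succeq V$) and a \emph{self-normalised noise} estimate ($\|S_i(X_i+V)^{-1/2}\|_2$ small).

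For the excitation estimate I would establish $\lambda_{\min}(X_i)\ge c_0|\mathcal{T}_i|$ on an event of probability $\ge1-\delta$ by the block-martingale small-ball argument of \cite{simchowitz2018learning,sarkar2019near} adapted to the \emph{thinned} index set $\mathcal{T}_i=\{t_1<\dots<t_{|\mathcal{T}_i|}\}$: conditionally on $\mathcal{F}_{t_{k-1}}$ the retained sample $x_{t_k}$ is a measurable term plus $\sum_{j=t_{k-1}}^{t_k-1}A_{(t_k-1:j+1)}e_j$, whose conditional covariance dominates that of the single innovation $e_{t_k-1}$ and is therefore $\succeq I$; so each retained sample is uniformly excited in all directions irrespective of the gaps, and the small-ball lemma together with an $\varepsilon$-net over the unit sphere gives the bound once $|\mathcal{T}_i|\ge T^{(0)}=K(n+\ln(2/\delta))$. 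Following \cite[Sec.\ 9]{sarkar2019near}, because the scale of $X_i$ is unknown this step must be carried out simultaneously at a geometric grid of $O(\ln\mathrm{tr}(\Gamma_{N_i}))$ truncation levels covering the possible range $[\,|\mathcal{T}_i|,\ \sim|\mathcal{T}_i|\,\mathrm{tr}(\Gamma_{N_i})\,]$ of $\|X_i\|_2$; the union bound over this grid at confidence $\delta$ is exactly what forces the larger burn-in $64n\ln(\mathrm{tr}(\Gamma_{N_i}-I)+1)+128n\ln(5/\delta)$ in \eqref{eq:Inequality}.

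For the noise estimate I would invoke the standard self-normalised martingale tail bound in the form used in \cite[Sec.\ 9]{sarkar2019near}: with probability $\ge1-\delta$, $\|S_i(X_i+V)^{-1/2}\|_2^2$ is at most a universal constant times $n\ln\bigl(\det(X_i+V)^{1/n}\det(V)^{-1/n}\bigr)+n\ln(5/\delta)$. Then bound the determinant by a trace through AM--GM, $\det(X_i+V)^{1/n}\le\mathrm{tr}(X_i+V)/n=\mathrm{tr}(X_i)/n+c_0|\mathcal{T}_i|$, control $\mathrm{tr}(X_i)$ by its mean $\mathbb{E}\,\mathrm{tr}(X_i)=\sum_{t\in\mathcal{T}_i}\mathrm{tr}(\Gamma_t)$ via a scalar (Hanson--Wright / martingale) concentration inequality on an event of probability $\ge1-\delta$, and pass from $\sum_{t\in\mathcal{T}_i}\mathrm{tr}(\Gamma_t)$ to $|\mathcal{T}_i|\,\mathrm{tr}(\Gamma_{N_i})$ using the Gramian estimates of \cite[Sec.\ 4,\ 9]{sarkar2019near}; this makes $\det(X_i+V)^{1/n}\det(V)^{-1/n}\lesssim1+\mathrm{tr}(\Gamma_{N_i})\le4\,\mathrm{tr}(\Gamma_{N_i})+1$. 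Combining the two estimates on the intersection of the (at most four) exceptional events, each of probability at most $\delta$, using $\lambda_{\min}(X_i)\ge c_0|\mathcal{T}_i|$, and choosing $c_0$ and the universal constants so that everything fits, yields
\[
\|\hat A_i-A_i\|_2^2\le\frac{32n}{|\mathcal{T}_i|}\Bigl[\tfrac12\ln\bigl(4\,\mathrm{tr}(\Gamma_{N_i})+1\bigr)+\ln\tfrac{5}{\delta}\Bigr],
\]
i.e.\ \eqref{eq:PreBound2}; the hypotheses $n\ge2$ and $\delta<1/4$ enter only in routine estimates (covering/AM--GM constants and $4\delta<1$).

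The main obstacle, and the principal place the argument departs from \cite{sarkar2019near}, is the Gramian bookkeeping. In the linear case $\Gamma_{t+1}=A\Gamma_tA^\top+I$ is monotone in the positive-semidefinite order, so $\Gamma_t\preceq\Gamma_T$ for $t\le T$ and both the trace of $X_i$ and the range of scales in the excitation step are automatically governed by the final Gramian; with switching, $\Gamma_{t+1}=A_{(t)}\Gamma_tA_{(t)}^\top+I$ is \emph{not} monotone, so one must show directly — using $\mathcal{T}_i\subseteq[N_i]$, that $N_i$ is the last instant mode $i$ is active, and the global (marginal or asymptotic) stability of the switched system under the considered switching classes — that $\mathrm{tr}\bigl(\sum_{t\in\mathcal{T}_i}\Gamma_t\bigr)$ and the truncation range are controlled by $\mathrm{tr}(\Gamma_{N_i})$ up to absolute constants. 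A second, milder departure is the non-contiguity of $\mathcal{T}_i$, which is resolved cleanly by the remark above: a single innovation between consecutive retained samples already supplies a direction-uniform conditional covariance $\succeq I$, so no lower bound on the inter-sample gaps is required.
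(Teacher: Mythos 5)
Your route is the paper's route: the paper does not give a self-contained proof of Theorem~\ref{thm:MainThm} at all, but simply asserts that, because the switching signal is deterministic, the argument of \cite[Sections 4, 9]{sarkar2019near} carries over verbatim with $A^k$ replaced by $A_{(j:k)}$ and the linear-system Gramian replaced by \eqref{eq:GammaDef}. Your sketch (martingale structure of $S_i$, an excitation/invertibility bound for $\lambda_{\min}(X_i)$ under the burn-in \eqref{eq:Inequality}, a self-normalised bound for $\|S_i(X_i+V)^{-1/2}\|_2$ with a determinant--trace step, and a union bound over a truncation grid that produces the $64n\ln(\mathrm{tr}(\Gamma_{N_i}-I)+1)$ term) is a faithful reconstruction of that outline, and your observation that gaps in $\mathcal{T}_i$ are harmless because the last innovation $e_{t_k-1}$ enters $x_{t_k}$ with coefficient $I$ is exactly the reason the deterministic-switching extension works at the excitation step.

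There is, however, one genuine gap, and it sits precisely at the point you yourself flag as ``the main obstacle'': you assert that, using stability of the switched system, one can show $\mathrm{tr}\bigl(\sum_{t\in\mathcal{T}_i}\Gamma_t\bigr)$ and the truncation range are controlled by $\mathrm{tr}(\Gamma_{N_i})$ up to absolute constants, but you never prove it, and as stated it is false in the generality of the theorem. In the linear case this step is free because $\Gamma_{t+1}=A\Gamma_tA^\top+I\succeq\Gamma_t$; with switching, take $A_1=0$, $A_2=I$ (marginally stable under arbitrary switching with $m=1$), repeat $M$ times the pattern of $999$ steps of mode $2$ followed by one step of mode $1$, and append two consecutive mode-$1$ steps at the end. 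Then every earlier $t\in\mathcal{T}_1$ has $\mathrm{tr}(\Gamma_t)\approx 1000\,n$, while $N_1$ is the second consecutive mode-$1$ step and $\mathrm{tr}(\Gamma_{N_1})\approx n$, so $\frac{1}{|\mathcal{T}_1|}\sum_{t\in\mathcal{T}_1}\mathrm{tr}(\Gamma_t)$ exceeds $\mathrm{tr}(\Gamma_{N_1})$ by an arbitrarily large factor even though $|\mathcal{T}_1|=M+2$ can satisfy any burn-in. So the quantity that naturally appears in the Sarkar--Rakhlin bookkeeping is $\max_{t\in\mathcal{T}_i}\mathrm{tr}(\Gamma_t)$ (or the average over $\mathcal{T}_i$), not $\mathrm{tr}(\Gamma_{N_i})$, and replacing one by the other requires an argument that neither your proposal nor the paper supplies; the paper effectively inherits the monotonicity step from the linear case without comment, and its later corollaries are insulated from the issue only because there the Gramian is bounded uniformly in $T$ via Theorems~\ref{thm:ArbitrarySwitchMarginal}--\ref{thm:Average}. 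To complete your proof you should either state the bound with $\max_{t\in\mathcal{T}_i}\mathrm{tr}(\Gamma_t)$ in place of $\mathrm{tr}(\Gamma_{N_i})$, or add a hypothesis/argument under which $\Gamma_t\preceq c\,\Gamma_{N_i}$ for all $t\in\mathcal{T}_i$.
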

The bound in \eqref{eq:PreBound2} decreases as $|\mathcal{T}_i |$ increases, and it holds uniformly for all the modes with probability at least $1-4s\delta$ for any $\delta \in (0,1/(4s))$. The lower bound \eqref{eq:Inequality} requires a mode  to be visited sufficiently often and is related to the persistent excitation of the state measurements, i.e., it ensures that $\sum_{t \in \mathcal{T}_i} x_{t}  x_{t}^\top$ in \eqref{eq:LS} is invertible and also well-conditioned with high probability. Similar requirements for the switching sequence also appear, e.g., in \cite{vidal2008recursive}.
\begin{rem}
The constant $K$ in $T^{(0)}$ is due to a concentration inequality of sub-Gaussian random matrices, see \cite[Propsition 8.3]{sarkar2019near}. Concentration inequalities and the resulting finite-sample results are typically not precise and hold up to some unspecified constants, as discussed in \cite{vershynin2010introduction} and shown in the results of \cite{simchowitz2018learning,sarkar2019near,jedra2020finite}. The main objectives of interest are typically the change rate of the guarantees when important parameters change, e.g., the sample size or the state dimension.\hfill  $\blacksquare$
\end{rem}

For the general case where $n \geqslant 1$, the bound \eqref{eq:PreBound2} admits a less compact formulation:
\begin{coro}
For any $\delta \in (0,1/4)$, if $|\mathcal{T}_i | \geqslant $
\begin{equation*}
\max\{ T^{(0)}, \text{ }  64n \ln\big(  \mathrm{tr}(\Gamma_{N_i} -I   ) + 1\big) + 128 \ln \Big(  \frac{5^n}{ \delta^{1+n/2}}\Big) \}, 
\end{equation*}
then with at least probability $1- 4\delta$, we have $\|\hat{A}_i - A_i \|_2  \leqslant $
\begin{equation*} 
\frac{1}{\sqrt{|\mathcal{T}_i |}} \sqrt{32n \Big[ \frac{1}{2}\ln\big(4 \mathrm{tr}(\Gamma_{N_i} )+1 \big) + \ln \Big( \frac{5}{ \delta^{1/n+1/2}}\Big) \Big] }.
\end{equation*}
\end{coro}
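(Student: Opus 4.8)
The plan is to retrace the switched-system adaptation of \cite[Sections~4,~9]{sarkar2019near} that produced Theorem~\ref{thm:MainThm}, but to stop one line earlier---before the cosmetic simplification that forces $n\geqslant 2$. That derivation rests on two ingredients. First, a sub-Gaussian matrix concentration inequality (the origin of the constant $K$ in $T^{(0)}$, cf.\ \cite[Proposition~8.3]{sarkar2019near}) shows that, once $|\mathcal{T}_i|$ exceeds the stated threshold, the empirical covariance $X_i=\sum_{t\in\mathcal{T}_i}x_tx_t^\top$ is, on an event of probability at least $1-2\delta$, both invertible and well conditioned, i.e.\ sandwiched between a positive multiple of $I$ (of order $|\mathcal{T}_i|$, since $\Gamma_t\geqslant I$) and a deterministic multiple of $\mathrm{tr}(\Gamma_{N_i})$; the $64n\ln(\mathrm{tr}(\Gamma_{N_i}-I)+1)$ and $\ln(\cdot/\delta)$ terms in the sample-size requirement are the prices of, respectively, the conditioning and the invertibility. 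Second, a self-normalized martingale bound for $S_i=\sum_{t\in\mathcal{T}_i}e_tx_t^\top$---proved by covering the unit sphere of $\mathbb{R}^n$ with a net of cardinality at most $5^n$, union-bounding the scalar self-normalized inequality over it, and inserting the log-determinant estimate $\ln\det(\cdot)\leqslant\tfrac{n}{2}\ln(4\,\mathrm{tr}(\Gamma_{N_i})+1)$---controls $\|S_iX_i^{-1/2}\|_2$ by $\sqrt{32n\big[\tfrac12\ln(4\,\mathrm{tr}(\Gamma_{N_i})+1)+\ln(5/\delta^{\,1/n+1/2})\big]}$ on a further event of probability at least $1-2\delta$. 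On the intersection of these events $X_i^\dagger=X_i^{-1}$, so \eqref{eq:LSerror} gives $\|\hat A_i-A_i\|_2=\|S_iX_i^{-1}\|_2\leqslant\|S_iX_i^{-1/2}\|_2\big/\sqrt{\sigma_{\mathrm{min}}(X_i)}$; substituting the lower bound on $\sigma_{\mathrm{min}}(X_i)$ produces exactly the displayed bound and the displayed sample-size condition, for every $n\geqslant 1$, with total failure probability at most $4\delta$.

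It then remains only to check consistency with Theorem~\ref{thm:MainThm}, which I would do via the elementary observation that for $n\geqslant 2$ one has $1+n/2\leqslant n$, equivalently $1/n+1/2\leqslant 1$. Since $\delta\in(0,1)$ the map $a\mapsto\delta^{-a}$ is increasing, so in that regime $\ln(5/\delta^{\,1/n+1/2})\leqslant\ln(5/\delta)$ and $\ln(5^n/\delta^{\,1+n/2})\leqslant n\ln(5/\delta)$; feeding these into the Corollary enlarges both the error bound and the sample-size threshold, and the resulting---strictly weaker---statement is precisely Theorem~\ref{thm:MainThm}. This also explains the restriction $n\geqslant 2$ there: for $n=1$ the inequality $1+n/2\leqslant n$ fails, so the compact form \eqref{eq:PreBound2} is not valid, whereas the Corollary, retaining the sharp exponents $1+n/2$ and $1/n+1/2$, remains correct.

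I expect the only real work to be bookkeeping: tracking the $n$-dependent and $n$-independent contributions through the spherical covering and the two concentration events so that the constants $32$, $64$, $128$ and the exponents $1/n+1/2$ and $1+n/2$ come out exactly as stated---in particular, seeing that the $n$-proportional part of the $\ln(1/\delta)$ budget comes from the $5^n$-net (and from the $n$-dimensional noise) while the additive part comes from the scalar self-normalized inequality and the Gramian lower-bound event, so that their sum scales like $n+2$ rather than $n$. No probabilistic tool beyond those already invoked for Theorem~\ref{thm:MainThm} is needed, and the union bound over the (at most four) bad events is unchanged, which is why the confidence level stays $1-4\delta$.
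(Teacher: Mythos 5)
Your proposal is correct and takes essentially the same route as the paper: the corollary is stated there without a separate proof, being the unsimplified ($n\geqslant 1$) form of Theorem~\ref{thm:MainThm}, and both come from adapting the self-normalized-martingale plus covering-net plus covariance-concentration argument of \cite[Sections 4, 9]{sarkar2019near} to the switched-system Gramian, exactly as you sketch. Your consistency check, that for $n\geqslant 2$ one has $1+n/2\leqslant n$ and $1/n+1/2\leqslant 1$ so the compact bound \eqref{eq:PreBound2} follows by enlarging the corollary's exponents, is precisely the simplification that separates the two statements.
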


The bound \eqref{eq:PreBound2} depends on the switching sequence $\{w_t\}_{t=1}^N$ due to the Gramian $\Gamma_{N_i}$.  Given that $\mathrm{tr}(\Gamma_T) \leqslant n \lambda_{\mathrm{max}}(\Gamma_{T})$, we will further upper bound $\lambda_{\mathrm{max}}(\Gamma_{T})$ for a certain class of switching signals using the properties of the switching class. Combining this upper bound with the results in this section will lead to estimation error bounds that capture the dependence on the parameters of switching strategies. These error bounds can reveal how the properties of the switching signal influence the estimation error.

Note that the derived bounds for the Gramian in the following sections are also applicable to other finite-sample bounds for linear system identification \cite{simchowitz2018learning,jedra2020finite,djehiche2021non} when extended to switched systems, as the Gramian is an essential object in these bounds. In addition, since $\lambda_{\mathrm{min}}(\Gamma_T)$ can also be of interest, e.g., in \cite{simchowitz2018learning,jedra2020finite}, the analysis of the lower bound for $\lambda_{\mathrm{min}}(\Gamma_T)$ will also be exploited in this work.

\section{Spectral properties of the Gramian}
To derive a  bound for $ \lambda_{\mathrm{max}}(\Gamma_{T})$ that captures the parameters of the switching strategies, the properties of the switching signal should be further specified. In this work, we consider the typical classes of deterministic switching signals that are arbitrary or under time restriction \cite{lin2009stability}.

In addition, both sides of \eqref{eq:Inequality} depend on $|\mathcal{T}_i |$; this is clear when only a single mode is active, i.e., $N_i = |\mathcal{T}_i |$. Thus, $ \lambda_{\mathrm{max}}(\Gamma_{T})$ should not grow too fast as $T$ increases; otherwise, there may not exist a $\mathcal{T}_i$ for \eqref{eq:Inequality} to hold. To control the growth rate of $ \lambda_{\mathrm{max}}(\Gamma_{T})$, the (nominal) switched system is assumed to be asymptotically or marginally stable under switching.

\subsection{Arbitrary switching}
We first consider systems that are marginally or asymptotically stable under arbitrary switching, i.e., any switching sequence. The following stability condition follows immediately from \cite[Theorem 6]{lin2009stability}.
\begin{lem}\label{lem:StableArbitrarySwitch}
The switched system is globally marginally stable under arbitrary switching (or asymptotically stable) if there exists a positive integer $m$ such that $\|A_{s_1}\dots A_{s_m}\|_2 \leqslant 1$ (or $\|A_{s_1}\dots A_{s_m}\|_2 < 1$) for all $s_j \in [s]$ and $j=1,\dots,m$.
\end{lem}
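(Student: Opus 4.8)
The plan is to reduce the infinite-horizon behaviour of the noise-free recursion $x_{t+1}=A_{w_t}x_t$ to the finite family of products of length $m$ that appears in the hypothesis. Setting $e_t\equiv 0$ in \eqref{eq:StateEvolution} gives $x_t=A_{(t-1:0)}x_0$, so it suffices to bound $\|A_{(t-1:0)}\|_2$ uniformly in $t$ and in the switching sequence (for marginal stability), respectively to show that it tends to $0$ as $t\to\infty$ uniformly in the switching sequence (for asymptotic stability).

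First I would write $t=qm+r$ with $q=\floor{t/m}$ and $0\leqslant r<m$, and group the $t$ factors of $A_{(t-1:0)}=A_{(t-1)}A_{(t-2)}\cdots A_{(0)}$ into $q$ consecutive blocks of $m$ factors each, followed by a final block of $r$ factors. Every block of $m$ consecutive factors has the form $A_{s_1}A_{s_2}\cdots A_{s_m}$ for some indices $s_j\in[s]$ read off the switching sequence, hence by hypothesis has spectral norm at most $1$; in the asymptotic case it has norm at most $\gamma\triangleq\max_{(s_1,\dots,s_m)\in[s]^m}\|A_{s_1}\cdots A_{s_m}\|_2$, and $\gamma<1$ because this is a maximum over a finite set of quantities each $<1$. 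The leftover block of $r<m$ factors is a product of at most $m-1$ matrices drawn from $\{A_1,\dots,A_s\}$, so its norm is at most $C\triangleq\max\{1,\ \max\|A_{j_1}\cdots A_{j_\ell}\|_2\}$, where the inner maximum runs over $1\leqslant\ell\leqslant m-1$ and $(j_1,\dots,j_\ell)\in[s]^\ell$ and the $1$ covers the case $r=0$ (empty product); again this is a maximum over a finite set, hence finite. Submultiplicativity of $\|\cdot\|_2$ then yields $\|A_{(t-1:0)}\|_2\leqslant C$ in the marginal case and $\|A_{(t-1:0)}\|_2\leqslant C\gamma^{q}$ in the asymptotic case.

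Combining these estimates with $x_t=A_{(t-1:0)}x_0$ gives $\|x_t\|_2\leqslant C\|x_0\|_2$ for all $t$, all $x_0$, and all admissible switching sequences, which is exactly global marginal stability under arbitrary switching with $b=C$. In the asymptotic case, $\|x_t\|_2\leqslant C\gamma^{\floor{t/m}}\|x_0\|_2\to 0$ since $\floor{t/m}\to\infty$, uniformly over switching sequences (and in fact with an exponential rate). The argument is elementary; the only points that need a little care are invoking submultiplicativity of $\|\cdot\|_2$ correctly on the grouped product and observing that the relevant suprema --- over leftover products of length $<m$ and, in the asymptotic case, over length-$m$ products --- are maxima over finite sets, which holds precisely because $s$ and $m$ are finite. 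No genuine obstacle is expected, consistent with the statement being quoted as an immediate consequence of \cite[Theorem 6]{lin2009stability}; the block-decomposition argument above is the standard route to it.
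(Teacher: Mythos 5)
Your proof is correct, and it is worth noting that the paper does not actually supply one: it dispatches the lemma in a single line by citing \cite[Theorem 6]{lin2009stability}, remarking only that the cited result gives a necessary and sufficient condition when the $\infty$-norm and asymptotic stability are used, whereas here the $2$-norm version is taken as a sufficient condition. Your block-decomposition argument is exactly the elementary mechanism underlying that citation: writing $t=qm+r$, bounding each of the $q$ length-$m$ blocks by $1$ (or by $\gamma=\max_{(s_1,\dots,s_m)\in[s]^m}\|A_{s_1}\cdots A_{s_m}\|_2<1$, a maximum over a finite set), bounding the leftover product of fewer than $m$ factors by a finite constant $C$, and invoking submultiplicativity gives $\|x_t\|_2\leqslant C\|x_0\|_2$ in the marginal case and $\|x_t\|_2\leqslant C\gamma^{\floor{t/m}}\|x_0\|_2\to 0$ in the asymptotic case, which matches the paper's definitions of global marginal and asymptotic stability under arbitrary switching. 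What your route buys is a self-contained proof adapted to the $2$-norm and to the exact stability notion the paper uses (the uniform bound $b=C$), plus an explicit exponential decay rate $\gamma^{1/m}$ in the asymptotic case; what the paper's route buys is brevity and a pointer to the sharper converse statement available in the cited survey. The only care points — that every block of $m$ consecutive factors is indeed of the hypothesized form $A_{s_1}\cdots A_{s_m}$, and that the suprema over leftover and length-$m$ products are finite maxima — are handled correctly in your write-up, and the same grouping device reappears in the paper's own proof of Theorem~\ref{thm:ArbitrarySwitchMarginal}, so your argument is fully consistent with how the paper later exploits the stability horizon $m$.
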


In the above case, we say that the switched system is globally marginally or asymptotically stable with stability horizon $m$. Here, $m$ can be interpreted as a safe time horizon, within which any switching sequence will not affect stability. It has been shown in \cite{lin2009stability} that if global asymptotic stability and the $\infty$-norm are considered instead, the above condition is sufficient and necessary. In this work, we consider the $2$-norm to facilitate our analysis of $\lambda_{\mathrm{max}}(\Gamma_T)$. With this result, we aim to upper bound $\lambda_{\mathrm{max}}(\Gamma_T)$ as a function of $m$.
\begin{thm} \label{thm:ArbitrarySwitchMarginal}
  Define $\sigma_{\mathrm{max}} \triangleq \max_{i \in [s]} \sigma_{\mathrm{max}}(A_i)$ and $\sigma_{\mathrm{min}} \triangleq \min_{i \in [s]} \sigma_{\mathrm{min}}(A_i)$. If there exist a positive integer $m$ and two real numbers $a_{\mathrm{min}}, a_{\mathrm{max}} \in [0,1]$ such that $a_{\mathrm{min}} \leqslant \sigma_{\mathrm{min}}(A_{s_1}\dots A_{s_m}) \leqslant \|A_{s_1}\dots A_{s_m}\|_2 \leqslant a_{\mathrm{max}}$ for all $s_j \in [s]$ and $j=1,\cdots,m$, then it holds that
\begin{align}
& \Big( \sum_{i=0}^{m-1} \sigma_{\mathrm{min}}^{2i} \Big) \sum_{j=0}^{\floor{(T-1)/m}}a_{\mathrm{min}}^{2j}  
\leqslant
\lambda_\mathrm{min}(\Gamma_T) \nonumber \\& \leqslant \lambda_\mathrm{max}(\Gamma_T) \leqslant   \Big( \sum_{i=0}^ {m-1} \sigma_{\mathrm{max}}^{2i} \Big) \sum_{j=0}^{\floor{(T-1)/m}}a_{\mathrm{max}}^{2j}. \label{eq:BoundArbi}
\end{align}
\end{thm}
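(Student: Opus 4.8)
The plan is to expand the Gramian $\Gamma_T$ in terms of products of length a multiple of the stability horizon $m$, and then bound each such block both above and below. First I would group the $T$ terms of the sum $\Gamma_T = \sum_{i=0}^{T-1} A_{(T-1:T-i)} A_{(T-1:T-i)}^\top$ according to the residue of $i$ modulo $m$. Equivalently, writing each index $i$ as $i = qm + r$ with $0 \leqslant r \leqslant m-1$, the factor $A_{(T-1:T-i)}$ splits as a product of $q$ consecutive blocks of length $m$ times a leftover product of length $r$. Since the block-length-$m$ products all have $2$-norm at most $a_{\mathrm{max}}$ by hypothesis, and the length-$r$ leftover has $2$-norm at most $\sigma_{\mathrm{max}}^r$ (each single factor $A_{(t)}$ has $\sigma_{\mathrm{max}}(A_{(t)}) \leqslant \sigma_{\mathrm{max}}$), submultiplicativity gives $\|A_{(T-1:T-i)}\|_2 \leqslant \sigma_{\mathrm{max}}^r a_{\mathrm{max}}^q$, hence $A_{(T-1:T-i)} A_{(T-1:T-i)}^\top \leqslant \sigma_{\mathrm{max}}^{2r} a_{\mathrm{max}}^{2q} I$ as a semidefinite inequality. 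Summing over all $i$ from $0$ to $T-1$ and crudely overcounting by letting $r$ range over all of $\{0,\dots,m-1\}$ and $q$ over all of $\{0,\dots,\lfloor (T-1)/m\rfloor\}$ yields $\Gamma_T \leqslant \big(\sum_{r=0}^{m-1}\sigma_{\mathrm{max}}^{2r}\big)\big(\sum_{q=0}^{\lfloor (T-1)/m\rfloor} a_{\mathrm{max}}^{2q}\big) I$, which gives the claimed upper bound on $\lambda_{\mathrm{max}}(\Gamma_T)$ after taking the largest eigenvalue.

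For the lower bound on $\lambda_{\mathrm{min}}(\Gamma_T)$ I would argue symmetrically but more carefully, since here overcounting works against us. The key observation is that for any $r$ with $0 \leqslant r \leqslant m-1$ and any $q$ with $qm + r \leqslant T-1$, the term with index $i = qm+r$ appears in the sum, and for such a term $A_{(T-1:T-i)} A_{(T-1:T-i)}^\top \geqslant \sigma_{\mathrm{min}}^{2r} a_{\mathrm{min}}^{2q} I$, using $\sigma_{\mathrm{min}}(A_{(T-1:T-i)}) \geqslant \sigma_{\mathrm{min}}^r a_{\mathrm{min}}^q$ from submultiplicativity of smallest singular values over the block decomposition together with the hypothesis $a_{\mathrm{min}} \leqslant \sigma_{\mathrm{min}}(A_{s_1}\cdots A_{s_m})$. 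One must check that the index set $\{qm+r : 0 \leqslant r \leqslant m-1,\ 0 \leqslant q \leqslant \lfloor(T-1)/m\rfloor\}$ is contained in (in fact covers a subset of) $\{0,\dots,T-1\}$ with no repetitions — the map $(q,r)\mapsto qm+r$ is injective, and $qm+r \leqslant \lfloor(T-1)/m\rfloor m + (m-1)$, which may exceed $T-1$; so I would instead restrict $q$ suitably or simply note that dropping the nonnegative-definite terms with $qm+r > T-1$ only decreases the sum, leaving $\Gamma_T \geqslant \big(\sum_{r=0}^{m-1}\sigma_{\mathrm{min}}^{2r}\big)\big(\sum_{q=0}^{\lfloor(T-1)/m\rfloor} a_{\mathrm{min}}^{2q}\big) I$ provided every pair with $q \leqslant \lfloor(T-1)/m\rfloor$ and $r \leqslant m-1$ does contribute. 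Here the cleanest route is to pair $r$ only with those $q$ for which $qm+r \leqslant T-1$; since $a_{\mathrm{min}}, \sigma_{\mathrm{min}} \leqslant 1$, the smallest surviving powers dominate and a short monotonicity argument recovers exactly the stated product of geometric sums. Taking $\lambda_{\mathrm{min}}$ then gives the left inequality, and $\lambda_{\mathrm{min}}(\Gamma_T) \leqslant \lambda_{\mathrm{max}}(\Gamma_T)$ is automatic.

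The main obstacle I anticipate is the bookkeeping in the lower bound: making the index decomposition $i = qm+r$ line up precisely with the factored geometric sum $\big(\sum_{i=0}^{m-1}\sigma_{\mathrm{min}}^{2i}\big)\sum_{j=0}^{\lfloor(T-1)/m\rfloor} a_{\mathrm{min}}^{2j}$ without accidentally including indices $i \geqslant T$ or double-counting, and justifying the submultiplicative lower bound $\sigma_{\mathrm{min}}(XY) \geqslant \sigma_{\mathrm{min}}(X)\sigma_{\mathrm{min}}(Y)$ applied iteratively across the $q$ blocks and the leftover. Everything else — the semidefinite ordering $M M^\top \leqslant \|M\|_2^2 I$ and its lower analogue, submultiplicativity of $\|\cdot\|_2$, and passing from $\Gamma_T \leqslant cI$ to $\lambda_{\mathrm{max}}(\Gamma_T) \leqslant c$ — is routine.
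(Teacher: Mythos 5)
Your upper-bound argument is essentially the paper's own: write $i=\floor{i/m}m+b_i$, bound each length-$m$ block by $a_{\mathrm{max}}$ and the leftover $b_i$ factors by $\sigma_{\mathrm{max}}^{b_i}$, and then enlarge the index set to the full rectangle $\{0,\dots,\floor{(T-1)/m}\}\times\{0,\dots,m-1\}$; overcounting is harmless in that direction, so that half is fine.

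The genuine gap is the step you deferred in the lower bound. The per-term estimates $\sigma_{\mathrm{min}}(A_{(T-1:T-i)})\geqslant \sigma_{\mathrm{min}}^{b_i}a_{\mathrm{min}}^{\floor{i/m}}$ only give $\lambda_{\mathrm{min}}(\Gamma_T)\geqslant\sum_{i=0}^{T-1}\sigma_{\mathrm{min}}^{2b_i}a_{\mathrm{min}}^{2\floor{i/m}}$, i.e., the sum over the actual $T$ indices, whereas the claimed product of geometric sums is the sum over the full rectangle, which contains $m\floor{(T-1)/m}+m-T$ additional terms whenever $m\nmid T$. Those extra terms are strictly positive as soon as $a_{\mathrm{min}},\sigma_{\mathrm{min}}>0$, so no monotonicity argument can restore them --- and indeed the left inequality in \eqref{eq:BoundArbi} as stated can fail: for $n=s=1$, $A_1=a\in(0,1)$, $m=2$, $T=3$, $\sigma_{\mathrm{min}}=a$, $a_{\mathrm{min}}=a^2$, one has $\lambda_{\mathrm{min}}(\Gamma_3)=1+a^2+a^4$, while the claimed lower bound is $(1+a^2)(1+a^4)=1+a^2+a^4+a^6$. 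What your argument, correctly completed, actually yields is
\begin{equation*}
\lambda_{\mathrm{min}}(\Gamma_T)\;\geqslant\;\Big(\sum_{r=0}^{m-1}\sigma_{\mathrm{min}}^{2r}\Big)\sum_{q=0}^{\floor{T/m}-1}a_{\mathrm{min}}^{2q}\;+\;a_{\mathrm{min}}^{2\floor{T/m}}\sum_{r=0}^{T-m\floor{T/m}-1}\sigma_{\mathrm{min}}^{2r},
\end{equation*}
which coincides with the stated bound exactly when $m$ divides $T$. For the record, the paper's own proof dismisses the lower bound with ``follows similarly'' and has the same defect, so your overall strategy is the intended one; but the ``short monotonicity argument'' you invoke does not exist, and either the second sum in the statement must be truncated as above or $T$ restricted to multiples of $m$ for the argument to close.
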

\begin{proof}
For any $i \in \{0,1,\dots, T-1\}$, it holds that $i = \floor{i/m} m+b_i$ for some non-negative integer $b_i < m$. This leads to $\|A_{(T-1,T-i)}\|_2 \leqslant$
\begin{align}
& \|A_{(T-1,T-b_i)} \|_2 \|A_{T-b_i-1,T-b_i-m}\|_2 \dots  \| A_{(T-i+m-1,T-i)}\|_2 \nonumber \\
&\leqslant  a_{\mathrm{max}}^{\floor{i/m}} \|A_{(T-1,T-b_i)} \|_2   \leqslant \sigma_{\mathrm{max}}^{b_i} a_{\mathrm{max}}^{\floor{i/m}}. \label{eq:ProofMinDwell1}
\end{align}
Since $\floor{i/m}\in \{0,\dots, \floor{(T-1)/m}\}$ and $b_i<m$, it holds that $\lambda_\mathrm{max}(\Gamma_T) \leqslant$
\begin{align*}
& \sum_{i=0}^{T-1} \| A_{(T-1:T-i)}\|_2^2 \nonumber \leqslant
 \sum_{j=0}^{\floor{(T-1)/m}} \Big[ a_{\mathrm{max}}^{2j} \Big( \sum_{b=0}^ {m-1} \sigma_{\mathrm{max}}^{2b} \Big) \Big], 
\end{align*}
which concludes the upper bound. The lower bound follows similarly from the fact that 
$
 \lambda_\mathrm{min}(\Gamma_T) \geqslant \sum_{i=0}^{T-1} \sigma_{\mathrm{min}}(A_{(T-1:T-i)})^2.
$
\end{proof}
The upper bound can be simplified in special cases.
\begin{coro} \label{coro:Arbitra}
In the setting of Theorem~\ref{thm:ArbitrarySwitchMarginal}, if $\sigma_{\mathrm{max}} \not= 1$ holds additionally, we have \begin{equation}\label{eq:BoundArbiCoro}
    \lambda_\mathrm{max}(\Gamma_T) \leqslant p (\floor{T/m}+1), \end{equation} where $p \triangleq (1- \sigma_{\mathrm{max}}^{2m})/(1-\sigma_{\mathrm{max}}^2 ) $; \hfill  $\blacksquare$
\end{coro}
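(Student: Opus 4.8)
The plan is to take the upper bound from Theorem~\ref{thm:ArbitrarySwitchMarginal} as the starting point, namely
\[
\lambda_\mathrm{max}(\Gamma_T) \leqslant \Big( \sum_{i=0}^{m-1} \sigma_{\mathrm{max}}^{2i} \Big) \sum_{j=0}^{\floor{(T-1)/m}} a_{\mathrm{max}}^{2j},
\]
and simplify the two factors independently. For the first factor, the additional hypothesis $\sigma_{\mathrm{max}} \neq 1$ gives $\sigma_{\mathrm{max}}^2 \neq 1$, so the finite geometric sum evaluates in closed form to $\sum_{i=0}^{m-1} \sigma_{\mathrm{max}}^{2i} = (1 - \sigma_{\mathrm{max}}^{2m})/(1 - \sigma_{\mathrm{max}}^2) = p$; one should note in passing that $p>0$ whether $\sigma_{\mathrm{max}}<1$ or $\sigma_{\mathrm{max}}>1$, since numerator and denominator then share the same sign.

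For the second factor, the hypothesis of Theorem~\ref{thm:ArbitrarySwitchMarginal} already includes $a_{\mathrm{max}} \in [0,1]$, hence $a_{\mathrm{max}}^{2j} \leqslant 1$ for every $j \geqslant 0$, and the sum over $j = 0,\dots,\floor{(T-1)/m}$ of at most-$1$ terms is bounded by the number of terms, $\floor{(T-1)/m}+1$. Finally, monotonicity of the floor function gives $\floor{(T-1)/m} \leqslant \floor{T/m}$, so combining the two estimates yields $\lambda_\mathrm{max}(\Gamma_T) \leqslant p\,(\floor{T/m}+1)$, which is \eqref{eq:BoundArbiCoro}.

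There is no substantive obstacle here; the corollary is a direct specialization of Theorem~\ref{thm:ArbitrarySwitchMarginal}. The only point deserving attention is the case distinction encoded in the extra hypothesis $\sigma_{\mathrm{max}} \neq 1$: it is exactly what legitimizes the closed-form geometric-series expression for $p$. (Had $\sigma_{\mathrm{max}} = 1$, the first factor would simply equal $m$, and the same argument would give the bound with $m$ in place of $p$, which is why that case is excluded from the compact statement.)
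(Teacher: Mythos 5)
Your proof is correct and is exactly the intended derivation: the paper omits the proof of this corollary, and the natural route is precisely yours — evaluate the first geometric factor in closed form as $p$ (legitimate since $\sigma_{\mathrm{max}}\neq 1$), bound each $a_{\mathrm{max}}^{2j}\leqslant 1$ using $a_{\mathrm{max}}\in[0,1]$, count the $\floor{(T-1)/m}+1$ terms, and use $\floor{(T-1)/m}\leqslant\floor{T/m}$ together with $p>0$. No gaps.
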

The above corollary also covers the situation with $\sigma_{\mathrm{max}} >1$, and combining it with \eqref{eq:PreBound2} can lead us to an estimation error bound that depends on the stability horizon $m$:
\begin{coro} \label{coro:ArbitrarySwitch}
For any $\delta \in (0,1/4)$, if the switched system is globally marginally stable under an arbitrary switching signal with stability horizon $m$, and if it holds that $\sigma_{\mathrm{max}} \not= 1$, $n \geqslant 2$ and $|\mathcal{T}_i | \geqslant \max\{ T^{(0)}, $
\begin{equation}
    \text{ }  64n \ln\Big(n  \big[p (\floor{N_i/m}+1)-1\big] + 1\Big) + 128n \ln \Big(  \frac{5}{ \delta}\Big) \}, \label{eq:lowerBoundT}
\end{equation}
then with probability at least $1-4\delta$, we have $\|\hat{A}_i - A_i \|_2 \leqslant $
\begin{equation*} 
\frac{1}{\sqrt{|\mathcal{T}_i |}} \sqrt{32n \Big[ \frac{1}{2}\ln \big[4 np (\floor{N_i/m}+1)+1 \big] + \ln \Big( \frac{5}{ \delta }\Big) \Big] }.
\end{equation*}
\end{coro}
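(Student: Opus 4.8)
The plan is to obtain the statement by chaining Theorem~\ref{thm:MainThm} with Corollary~\ref{coro:Arbitra}, using only the elementary inequality $\mathrm{tr}(\Gamma_T) \leqslant n\lambda_{\mathrm{max}}(\Gamma_T)$ and the monotonicity of the logarithm. First I would verify that the hypotheses of Corollary~\ref{coro:Arbitra} (equivalently, of Theorem~\ref{thm:ArbitrarySwitchMarginal}) are in force here. Global marginal stability with stability horizon $m$ means precisely that $\|A_{s_1}\dots A_{s_m}\|_2 \leqslant 1$ for all $s_j \in [s]$, so one may take $a_{\mathrm{max}} = 1$ and $a_{\mathrm{min}} = 0$, both lying in $[0,1]$, and the required chain $0 \leqslant \sigma_{\mathrm{min}}(A_{s_1}\dots A_{s_m}) \leqslant \|A_{s_1}\dots A_{s_m}\|_2 \leqslant 1$ holds trivially. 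Together with the standing assumption $\sigma_{\mathrm{max}} \neq 1$, Corollary~\ref{coro:Arbitra} applies and gives $\lambda_{\mathrm{max}}(\Gamma_T) \leqslant p(\lfloor T/m\rfloor + 1)$ for every $T$, with $p = (1-\sigma_{\mathrm{max}}^{2m})/(1-\sigma_{\mathrm{max}}^2) = \sum_{i=0}^{m-1}\sigma_{\mathrm{max}}^{2i} \geqslant 1$.

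Next I would convert this into a bound on the trace quantities appearing in Theorem~\ref{thm:MainThm}. Since the $i=0$ term of the sum defining $\Gamma_t$ in \eqref{eq:GammaDef} equals the identity, one has $\Gamma_{N_i} \geqslant I$, hence $\mathrm{tr}(\Gamma_{N_i}-I) \geqslant 0$; combining $\mathrm{tr}(\Gamma_{N_i}) \leqslant n\lambda_{\mathrm{max}}(\Gamma_{N_i})$ with the Gramian bound above yields $\mathrm{tr}(\Gamma_{N_i}-I) \leqslant n\lambda_{\mathrm{max}}(\Gamma_{N_i}) - n \leqslant n\big[p(\lfloor N_i/m\rfloor + 1) - 1\big]$ and similarly $\mathrm{tr}(\Gamma_{N_i}) \leqslant np(\lfloor N_i/m\rfloor + 1)$.

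Then I would substitute. For the sample-size requirement: because $\ln$ is increasing and $1 \leqslant \mathrm{tr}(\Gamma_{N_i}-I)+1 \leqslant n[p(\lfloor N_i/m\rfloor+1)-1]+1$, the right-hand side of \eqref{eq:Inequality} is bounded above by the expression in \eqref{eq:lowerBoundT}. Consequently, the hypothesis of the present corollary, namely that $|\mathcal{T}_i|$ is at least the quantity in \eqref{eq:lowerBoundT}, already implies the hypothesis \eqref{eq:Inequality} of Theorem~\ref{thm:MainThm}. That theorem then guarantees, with probability at least $1-4\delta$, the bound \eqref{eq:PreBound2}; replacing $\mathrm{tr}(\Gamma_{N_i})$ inside the logarithm of \eqref{eq:PreBound2} by the larger quantity $np(\lfloor N_i/m\rfloor+1)$, once more by monotonicity of $\ln$, produces exactly the stated bound on $\|\hat A_i - A_i\|_2$.

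There is no substantial obstacle in this argument; it is a direct assembly of earlier results. The only points requiring care are (i) checking that marginal stability with horizon $m$ genuinely supplies the submultiplicativity hypothesis of Corollary~\ref{coro:Arbitra} with the admissible choice $a_{\mathrm{max}}=1$, and (ii) ensuring that the substitution in the sample-complexity condition runs in the correct direction, i.e.\ that the explicit condition \eqref{eq:lowerBoundT} is at least as stringent as the Gramian-dependent condition \eqref{eq:Inequality}. This is why every inequality above is applied with $\Gamma_{N_i}$ on the smaller side, both in the sample-size condition and inside the logarithm of the error bound.
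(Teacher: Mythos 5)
Your proposal is correct and follows essentially the same route the paper intends: the corollary is obtained by combining Theorem~\ref{thm:MainThm} with the Gramian bound of Corollary~\ref{coro:Arbitra} (taking $a_{\mathrm{max}}=1$ from the stability-horizon definition) via $\mathrm{tr}(\Gamma_{N_i}) \leqslant n\lambda_{\mathrm{max}}(\Gamma_{N_i})$ and monotonicity of $\ln$. Your care about the direction of the substitution in the sample-size condition (the stated condition \eqref{eq:lowerBoundT} being at least as stringent as \eqref{eq:Inequality}) is exactly the point that makes the assembly valid.
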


The above error bound is logarithmic of $1/m$, and thus the increase of the stability horizon $m$ leads to a slow decrease in the error bound. Intuitively, given a data length $N$, a larger $m$ leads to less informative state measurements and thus a smaller Gramian in \eqref{eq:PreBound2}, which decreases the error bound. The decay rate of the bound in terms of the data length is $\mathcal{O}\big( \sqrt{(\ln N) / |\mathcal{T}_i | } \big)$, which agrees with the asymptotic analysis in \cite[Corrolary~4]{sayedana2022switch}. Furthermore, when the nominal system is asymptotically stable, i.e., $a_\mathrm{max}<1$, we have $\lambda_{\mathrm{max}}(\lambda_T) \leqslant m/(1-a^2_{\mathrm{max}})$, and combing this bound with \eqref{eq:PreBound2} shows that the estimation error of each mode is $\mathcal{O}\big( 1 /\sqrt{ |\mathcal{T}_i | }\big)$, which matches the optimal decay rate of the LS estimator for asymptotically stable linear systems \cite{jedra2020finite}. In addition, \eqref{eq:lowerBoundT} requires $|\mathcal{T}_i|$ to scale with the state dimension as $ \mathcal{O}\big( n\ln(n)\big)$, which is in line with the rate in \cite{simchowitz2018learning}.

\begin{rem} The bound in Corollary~\ref{coro:ArbitrarySwitch} is pseudo-data-independent, as $|\mathcal{T}_i|$ still varies over different switching sequences in the considered class of switching signals, i.e., arbitrary switching in this section. We choose to keep $|\mathcal{T}_i|$ in the bound, as the data size of a particular mode is valuable information for identifying one mode. In some situations, e.g., with a stochastic switching that visits every mode with a positive probability, it is possible to substitute $|\mathcal{T}_i|$ by a function of the total sample size $N$, e.g., see a similar step taken in \cite{sayedana2022switch} for the asymptotic analysis. Then the bound becomes less tight but completely data-independent. 
\end{rem}

\subsection{Minimum dwell time}
Time-restricted switching is a standard switching strategy in the control of switched systems \cite{lin2009stability}. The intuition is that if the system does not switch too often or stay too long at unstable modes, the overall switched system can be stable. We first consider switching with a dwell time constraint, where each mode is stable, and the system stays in each mode for a sufficiently long time such that the overall system is stable. The concept of dwell time is defined as follows.
\begin{defn}(\cite{lin2009stability})
A positive integer $\tau$ is called a dwell time of a switching signal if the time interval between two consecutive switchings is not smaller than $\tau$.
\end{defn}

To characterize $\lambda_{\mathrm{max}}(\Gamma_T)$ using the dwell time, we first define several new variables that capture the system properties. Let all $A_i$ be Schur stable in this subsection, and there always exist real constants $\rho <1$ and $c_i \geqslant 1$ such that $\|A_i^k\|_2 \leqslant c_i \rho^k$ holds for any positive integer $k$ and any $i \in [s]$ \cite{zhai2002qualitative}. Then, we define 
$
c \triangleq \max_{i \in [s]} c_i.
$

Finally, when the dwell time of the switching signal is larger than a minimum dwell time $\tau^\star$ such that the switched system is marginally stable or asymptotically stable, we can upper bound $\lambda_{\mathrm{max}}(\Gamma_T)$ as a function of the minimum dwell time and the constants $c$, $\rho$.

\begin{thm} \label{thm:MiniDwell} 
Suppose that $\rho(A_i) < 1$ holds for all $i \in [s]$, and
let $\tau^\star$ be any positive integer such that $\|A_i^{\tau^\star} \|_2 \leqslant a \triangleq c \rho^{\tau^\star} \leqslant 1$. If the dwell time $\tau$ of the switching signal in \eqref{eq:system} satisfies $\tau \geqslant \tau^\star$, we have 
\begin{itemize}
    \item 
    \begin{equation} \label{eq:DewellTimeBound}
\lambda_{\mathrm{max}}(\Gamma_T) \leqslant 1+ c^4 \frac{\rho^2}{1-\rho^2} \Big(1+ \frac{T}{ \tau^\star}
     \Big); \end{equation}
     
     \item if $a <1$ also holds, we have
     \begin{equation} \label{eq:DewellTimeBound2}
         \lambda_{\mathrm{max}}(\Gamma_T) \leqslant 1+ c^4 \frac{\rho^4}{1-\rho^2} \Big(1+  \frac{1-a^{\floor{T/\tau^\star}}}{1-a} \Big).
     \end{equation}
\end{itemize}
\end{thm}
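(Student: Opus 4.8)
The plan is to reduce the claim to a purely combinatorial estimate on matrix products. From \eqref{eq:GammaDef} and the fact that a sum of positive semi-definite matrices has $\lambda_{\max}$ no larger than the sum of the individual $\lambda_{\max}$'s, $\lambda_{\max}(\Gamma_T)\le\sum_{i=0}^{T-1}\|A_{(T-1:T-i)}\|_2^2$; the $i=0$ term is $\|I\|_2^2=1$ and supplies the leading ``$1$'' in both bounds, and everything else must be controlled in a way that does not see the particular switching sequence. The only input about the modes is $\|A_j^{k}\|_2\le c\rho^{k}$ for all $j$ and all $k\ge1$, and the single structural observation I would exploit is that on a maximal block of constant mode $j$ of length $\ell\ge\tau^\star$ one has $\|A_j^{\ell}\|_2\le c\rho^{\ell}=a\,\rho^{\ell-\tau^\star}\le a$, because $a=c\rho^{\tau^\star}$ and $\rho<1$. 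Thus every complete block the product passes through contributes a factor at most $a$ rather than a factor of order $c$; this is precisely what keeps the constant $c$ from compounding over the blocks, and it is the place where $\tau\ge\tau^\star$ enters (every block strictly between two switches has length $\ge\tau\ge\tau^\star$).

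Next, for each $i$ I would decompose the window $\{T-i,\dots,T-1\}$ into the blocks of constant mode it meets --- a current block adjacent to $T-1$ of some length $L_0\ge1$, then $\mu$ complete blocks (each of length $\ge\tau^\star$), and then one more block in which the window is truncated at its left end --- and write $A_{(T-1:T-i)}$ as the matching product of matrix powers. Bounding the current block and the truncated block by $c\rho^{(\cdot)}$ and each complete block by $a$ gives $\|A_{(T-1:T-i)}\|_2\le c^2 a^{\mu}\rho^{L_0+r}$, where $r\ge0$ is the depth reached into the truncated block (the cases $\mu=0$ reduce to $\|A_{(T-1:T-i)}\|_2\le c^2\rho^{i}$, and to $\|I\|_2=1$ when $i=0$). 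Squaring, using $\rho^{2L_0}\le\rho^2$, and summing over $i$ then factorises into an outer sum over $\mu$ and, for fixed $\mu$, an inner geometric series in $r$ of ratio $\rho^2$, which is at most $\tfrac{1}{1-\rho^2}$. The net result is $\lambda_{\max}(\Gamma_T)\le 1+c^2\tfrac{\rho^2}{1-\rho^2}+\tfrac{c^4\rho^2}{1-\rho^2}\sum_{\mu\ge0}a^{2\mu}$, the outer sum running only over the complete blocks the windows can reach.

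Finally I would count the complete blocks: each has length at least $\tau^\star$ and they all lie in $\{0,\dots,T-1\}$, so there are at most $\lfloor T/\tau^\star\rfloor$ of them and the outer index runs over at most $1+T/\tau^\star$ values. In the marginally stable case $a\le1$ I bound $\sum a^{2\mu}$ by the number of blocks, which gives a quantity growing linearly in $T$ and, after the crude relaxations $c^2\le c^4$, yields \eqref{eq:DewellTimeBound}; in the asymptotically stable case $a<1$ I instead keep the geometric series in $a$ but truncate it at the number of blocks, so that $\sum a^{2\mu}\le\sum a^{\mu}\le\tfrac{1-a^{\lfloor T/\tau^\star\rfloor}}{1-a}$, which after collecting the $\rho$- and $c$-powers yields \eqref{eq:DewellTimeBound2}. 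The step I expect to need the most care is the block bookkeeping itself: correctly handling the two boundary blocks (the current block at time $T-1$, and the block in which the window is cut off), which need not be as long as $\tau^\star$ and hence cannot be charged the factor $a$, and tracking exactly which range of $i$ belongs to each block so that the inner geometric series and the count of blocks are taken over the right index sets; the remaining manipulations are elementary.
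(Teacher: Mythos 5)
Your argument is correct and follows essentially the same route as the paper's proof: split each product $A_{(T-1:T-i)}$ at the switching instants, charge every complete inter-switch block the factor $a=c\rho^{\tau^\star}\leqslant 1$ (which is exactly where $\tau\geqslant\tau^\star$ enters), charge the two partial boundary blocks $c\rho^{(\cdot)}$ each, sum the inner geometric series of ratio $\rho^2$ to get $\rho^2/(1-\rho^2)$, and bound the number of complete blocks by $\lfloor T/\tau^\star\rfloor\leqslant T/\tau^\star$. One minor remark: collecting your constants gives $1+c^2\tfrac{\rho^2}{1-\rho^2}+c^4\tfrac{\rho^2}{1-\rho^2}\cdot\tfrac{1-a^{\lfloor T/\tau^\star\rfloor}}{1-a}$, i.e.\ \eqref{eq:DewellTimeBound2} with $\rho^2$ rather than the stated $\rho^4$, but the paper's own proof terminates at the same $\rho^2$ expression, so your derivation matches the paper's argument.
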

\begin{proof}
Let $P$ denote the number of switches within the time step interval $[0,T-1]$, and for any $j \in [P] $, $t_j$ denotes the first time step of the new mode after the $j$-th last switch, e.g., $t_1$ is the first time step after the last switch.  Therefore, for the term $A_{(T-1,T-i)}$ in $\Gamma_T$ and if $T-i < t_1$, we have for some $j \in [P]$, $\|A_{(T-1,T-i)}\|_2  \leqslant$
\begin{align*}
& \|A_{(T-1,t_1)}\|_2 \dots \| A_{(t_{j-1}-1,t_{j})} \|_2 \| A_{(t_j-1,T-i)} \|_2 \\
& \leqslant a^{j-1} \|A_{(T-1,t_1)}\|_2 \| A_{(t_{j}-1,T-i)} \|_2 ,
\end{align*}
where the last inequality follows from the minimum dwell time condition, and $a = c\rho^{\tau^\star} \leqslant 1$.

According to the defined variables $c$ and $\rho$, it holds that
$$
\| A_{(t_{j}-1,T-i)} \|_2  \leqslant c \rho^{t_{j}-T+i  }, \text{ }  \|A_{(T-1,t_1)}\|_2  \leqslant c \rho^{T-t_1}.
$$
If $T-i \geqslant t_1$,
$
\|A_{(T-1,T-i)}\|_2 \leqslant c \rho^{i}.
$
Therefore, we have
\begin{align*}
& \lambda_{\mathrm{max}}(\Gamma_T) \leqslant  \sum_{i=0}^{T-1} \|A_{(T-1,T-i)}\|_2^2  \leqslant 1+ \sum_{j=1}^{T-t_1} (c \rho^j) ^2 +  \\ &    (c \rho^{T-t_1})^2 \Big[ \sum_{j=1}^{P-1} a^{j-1} \Big(\sum_{i=1}^{t_{j}-t_{j+1}} (c \rho^{i})^2 \Big) 
 +  a^{P-1} \sum_{k=1}^{t_{P}-1} (c \rho^k)^2 \Big]. 
\end{align*}
Given $\rho<1$ and $a \leqslant 1$, for any positive integer $N$, we have $\sum_{k=1}^N \rho^{2k} \leqslant \frac{\rho^2}{1-\rho^2}$, which leads to
\begin{align*}
   & \lambda_{\mathrm{max}}(\Gamma_T)  \leqslant 1+  c^2 \frac{\rho^2}{1-\rho^2} \Big(1+ P c^2 \rho^{2(T-t_1)} 
     \Big) \\
     & \leqslant  1+ c^2 \frac{\rho^2}{1-\rho^2} \Big(1+ P c^2  
     \Big)   \leqslant 1+ c^4 \frac{\rho^2}{1-\rho^2} \Big(1+ \frac{T}{ \tau^\star}
     \Big). 
\end{align*}
Finally, if $a<1$, we have 
$
    \lambda_{\mathrm{max}}(\Gamma_T)  \leqslant  1+ c^4 \frac{\rho^2}{1-\rho^2} (1+ \sum_{j=1}^P a^{j-1}
     ). 
$
The fact that $P \leqslant \floor{T/\tau^{\star}}$ together with the above equation concludes the last bound.
\end{proof}
The first bound \eqref{eq:DewellTimeBound} is valid when the switched system is marginally stable, which is guaranteed by $\|A_i^{\tau^\star} \|_2 \leqslant c \rho^{\tau^\star} \leqslant 1$. The bound \eqref{eq:DewellTimeBound} shows that a smaller $\tau^\star$ leads to a larger bound for $\lambda_{\mathrm{max}}(\Gamma_T)$, which can be interpreted as the effect of more frequent switching on the more informative states. The bound \eqref{eq:DewellTimeBound2} is valid when the switched system is globally asymptotically stable. Then combining \eqref{eq:DewellTimeBound} and \eqref{eq:PreBound2} leads to the following estimation error bound.
\begin{coro} \label{coro:MinimumDwell}
In the setting of Theorem~\ref{thm:MiniDwell}, for any $\delta \in (0,1/4)$, if it holds additionally that $n \geqslant 2$ and $|\mathcal{T}_i | \geqslant \max\{ T^{(0)},$
$$
\text{ }  64n \ln\big[n \frac{c^4 \rho^2  }{1-\rho^2}\Big( 1+ \frac{N_i}{\tau^\star}\Big) + 1\big] + 128n \ln \Big(  \frac{5}{ \delta}\Big) \},
$$
then with probability at least $1-4\delta$, we have $\|\hat{A}_i - A_i \|_2 \leqslant $
\begin{equation*} 
\frac{1}{\sqrt{|\mathcal{T}_i |}} \sqrt{32n \Big[\ln \frac{1}{2}\big[4 n L +1 \big] +\ln \Big( \frac{5}{ \delta }\Big) \Big] },
\end{equation*}
where $L=\big( 1+ c^4 \frac{\rho^2}{1-\rho^2} (1+ \frac{N_i}{ \tau^\star}) \big) $.
\end{coro}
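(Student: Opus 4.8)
The plan is to feed the deterministic upper bound on $\lambda_{\mathrm{max}}(\Gamma_{N_i})$ from Theorem~\ref{thm:MiniDwell} into the preliminary estimate of Theorem~\ref{thm:MainThm}, replacing the data-dependent quantity $\mathrm{tr}(\Gamma_{N_i})$ by an expression depending only on $c$, $\rho$, $\tau^\star$, $n$ and $N_i$. Two elementary facts make this work: (i) $\mathrm{tr}(M)\leqslant n\,\lambda_{\mathrm{max}}(M)$ for any symmetric $n\times n$ matrix $M$; and (ii) since $\Gamma_T\geqslant I$ — the $i=0$ term in \eqref{eq:GammaDef} is $A_{(T-1:T)}A_{(T-1:T)}^\top=I$ — every eigenvalue of $\Gamma_T$ lies in $[1,\lambda_{\mathrm{max}}(\Gamma_T)]$, so $\mathrm{tr}(\Gamma_T-I)=\sum_k(\lambda_k(\Gamma_T)-1)\leqslant n(\lambda_{\mathrm{max}}(\Gamma_T)-1)$.

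Concretely, set $L\triangleq 1+c^4\frac{\rho^2}{1-\rho^2}(1+N_i/\tau^\star)$. We are in the setting of Theorem~\ref{thm:MiniDwell} (in particular $\rho(A_i)<1$ for all $i$ and $\tau\geqslant\tau^\star$), so inequality \eqref{eq:DewellTimeBound} gives $\lambda_{\mathrm{max}}(\Gamma_{N_i})\leqslant L$. Combining this with (i)--(ii) yields $\mathrm{tr}(\Gamma_{N_i}-I)+1\leqslant n(L-1)+1=n\,c^4\frac{\rho^2}{1-\rho^2}(1+N_i/\tau^\star)+1$ and $4\,\mathrm{tr}(\Gamma_{N_i})+1\leqslant 4nL+1$. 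Because $x\mapsto\ln x$ is increasing, the sample-size hypothesis of the corollary (which has $n(L-1)+1$ in place of $\mathrm{tr}(\Gamma_{N_i}-I)+1$) is no smaller than the right-hand side of \eqref{eq:Inequality}, so the hypothesis of Theorem~\ref{thm:MainThm} holds; since $n\geqslant2$ and $\delta\in(0,1/4)$ are inherited, that theorem applies verbatim and gives, with probability at least $1-4\delta$, $\|\hat A_i-A_i\|_2\leqslant\frac{1}{\sqrt{|\mathcal{T}_i|}}\sqrt{32n[\frac12\ln(4\,\mathrm{tr}(\Gamma_{N_i})+1)+\ln(5/\delta)]}$. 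Using monotonicity of $\ln$ once more to substitute $4\,\mathrm{tr}(\Gamma_{N_i})+1\leqslant 4nL+1$ turns this into the claimed bound (the form of \eqref{eq:PreBound2} with $4nL+1$ replacing $4\,\mathrm{tr}(\Gamma_{N_i})+1$). No extra probabilistic argument is needed beyond Theorem~\ref{thm:MainThm}, since \eqref{eq:DewellTimeBound} holds deterministically.

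The argument is essentially bookkeeping once Theorems~\ref{thm:MainThm} and \ref{thm:MiniDwell} are available, so there is no substantive obstacle; the only points that require care are (a) recording $\Gamma_T\geqslant I$ so that the sharper estimate $\mathrm{tr}(\Gamma_T-I)\leqslant n(\lambda_{\mathrm{max}}(\Gamma_T)-1)$ is legitimate — this is precisely what makes the $+1$ terms in the corollary's sample-size condition line up with \eqref{eq:Inequality} after substitution — and (b) checking that each monotonicity substitution is carried out in the direction that preserves the inequality, both in the lower bound on $|\mathcal{T}_i|$ and in the final error bound.
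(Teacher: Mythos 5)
Your proposal is correct and takes essentially the same route the paper intends: substitute the deterministic bound \eqref{eq:DewellTimeBound} into Theorem~\ref{thm:MainThm} via $\mathrm{tr}(\Gamma_{N_i})\leqslant n\,\lambda_{\mathrm{max}}(\Gamma_{N_i})$ (and $\Gamma_T\geqslant I$ for the sample-size condition), with no further probabilistic argument needed. Note that your derivation yields $\frac{1}{2}\ln\big[4nL+1\big]$, consistent with \eqref{eq:PreBound2} and Corollary~\ref{coro:ArbitrarySwitch}, which indicates the corollary's printed expression $\ln\frac{1}{2}\big[4nL+1\big]$ is a typo rather than a flaw in your argument.
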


Given $|\mathcal{T}_i |$, the above bound is logarithmic of $1/\tau^\star$, and thus a smaller minimum dwell time leads to a slow increase of the error bound, while the error bound is dominated by the sample size $|\mathcal{T}_i |$ and the state dimension $n$.
\begin{rem} \label{rem:LowerBound}
Unlike Theorem~\ref{thm:ArbitrarySwitchMarginal}, a lower bound for $\lambda_{\mathrm{min}}(\Gamma_T)$, which has a similar form as the upper bound, is not derived in Theorem~\ref{thm:MiniDwell}, due to the complexity of the analysis. In this case, the following straightforward lower bounds can be considered: $\lambda_{\mathrm{min}}(\Gamma_T) \geqslant 1$ or a tighter one, $\lambda_{\mathrm{min}}(\Gamma_T) \geqslant 1+ \sum_{j=1}^{T-1}\sigma_{\mathrm{min}}^{2j}$. \hfill $\blacksquare$
\end{rem}
\section{Spectral properties of the Gramian with average dwell time}
The so-called average dwell time constraint limits the number of switches in each time period, and it is less restrictive than the requirement for the minimum dwell time in the following two aspects: (i) Unstable modes are allowed to exist, while stability can still be guaranteed by the switching signal; (ii) switches can happen consecutively \cite{zhai2002qualitative}. In this section, we aim to upper bound $\lambda_{\mathrm{max}}(\Gamma_T)$ when the switching signal satisfies an average dwell time constraint. We consider the possible existence of unstable modes: assume $\rho(A_i)<1 $ for $i \in [s_0]$, with some $s_0 < s$, and $\rho(A_j) \geqslant 1$ for all $j\in \{s_0+1,\cdots,s\}$. In addition, there always exist positive real numbers $\lambda_1<1$, $\lambda_2 \geqslant 1$, and $C_v \geqslant 1$ with $v \in [s]$, such that for any positive integer $k$, it holds that $\|A_i^k\| \leqslant C_i \lambda_1^k$ and $\|A_j^k\| \leqslant C_j \lambda_2^k$, where $i \in [s_0]$ and $j \in \{s_0+1,\cdots,s\}$ \cite{zhai2002qualitative}. Then we define
\begin{equation}\label{eq:DefC}
    C = \max_{v\in[s]}\{C_v\}.
\end{equation}



Motivated by the class of switching signals in \cite{zhai2002qualitative}, we consider a slightly different class of switching signals in order to control the growth rate of $\lambda_{\mathrm{max}}(\Gamma_T)$. To introduce it, let $N_{w}(0,t)$ denote the number of switches of $w_t$ within the time step interval $[0,t)$. Let $K^-(0,t)$ and $K^+(0,t)$ denote the number of time steps of the stable and unstable modes within the time step interval $[0,t)$, respectively. Then the considered class of switching signals is defined as follows.

\begin{defn} \label{def:SwitchingAverage} 
Given $\lambda \in (\lambda_1,\lambda_2)$, $\lambda^\star \in (\lambda_1,\lambda]$, $\tau_\mathrm{a}>0$, a non-negative integer $N_0$, and a positive integer $h$, a class of switching signals, denoted by $\mathcal{S}(\tau_\mathrm{a},N_0,\lambda,\lambda^\star,h)$, satisfy the following condition: for any positive integer $j$, it holds that
\begin{enumerate}

    \item 
    \begin{equation} \label{eq:SwitchCons}
 K^-\big((j-1)h,jh\big) \geqslant r K^+\big((j-1)h,jh\big),
    \end{equation}
    where $r\triangleq (\ln \lambda_2 - \ln \lambda^\star)/(\ln \lambda^\star - \ln \lambda_1)$,
and

\item 
\begin{equation} \label{eq:AverageDwellTimeNew}
N_w\big((j-1)h,jh\big) \leqslant \bar{N}_w.
\end{equation}
where  $ \bar{N}_w \triangleq N_0+h/{\tau_\mathrm{a}}$.
\hfill $\blacksquare$ 
\end{enumerate}
\end{defn}
The conditions \eqref{eq:SwitchCons} and \eqref{eq:AverageDwellTimeNew}
constrain the number of unstable modes and switches in the time step intervals with length $h$. The condition \eqref{eq:AverageDwellTimeNew} indicates that if the first $N_0$ switches are ignored, then the average time step interval between two consecutive switches should be at least $\tau_\mathrm{a}$, and $\tau_\mathrm{a}$ is thus called the average dwell time \cite{zhai2002qualitative}. These conditions are extended from the ones in \cite{zhai2002qualitative}, where the same conditions hold with $h=T$ and $j=1$. Intuitively, \eqref{eq:SwitchCons} and \eqref{eq:AverageDwellTimeNew} lead to more evenly distributed unstable modes in a switching sequence, which can facilitate our analysis of $\Gamma_T$. We should also note that \eqref{eq:SwitchCons} and \eqref{eq:AverageDwellTimeNew} are not restrictive if $h$ is sufficiently large.

Given the defined class of switching signals, it is a straightforward extension of \cite[Theorem 3]{zhai2002qualitative} to show the stability of the nominal switched system.
\begin{lem}
\begin{itemize}
    \item If $C=1$ and $\lambda^\star \leqslant 1$ \big(or $\lambda^\star <1$\big) hold, then the switched system is globally marginally stable (or asymptotically stable) for any switching signal $w \in \mathcal{S}(\tau_\mathrm{a},N_0,\lambda,\lambda^\star,h)$ with any $N_0$, $\tau_\mathrm{a}$, $\lambda$ and $h$;
    
    \item If $C>1$, $\lambda \in (\lambda_1,1)$, $\lambda^\star \in (\lambda_1,\lambda)$ hold, and $N_0$ satisfies $N_0 \ln(C) \leqslant -h \ln(\lambda) $\big(or $N_0 \ln(C) < -h \ln(\lambda) $\big), then there exists a $\tau_\mathrm{a}^\star$ such that the switched system is globally marginally stable (or asymptotically stable) for any switching signal $w \in \mathcal{S}(\tau^\star_\mathrm{a},N_0,\lambda,\lambda^\star,h)$.
\end{itemize}
\end{lem}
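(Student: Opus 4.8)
The plan is to reduce both stability claims to a single exponential bound on the state‑transition matrix $A_{(T-1:0)}$ and then read off its exponent from the two defining conditions of $\mathcal{S}(\tau_\mathrm{a},N_0,\lambda,\lambda^\star,h)$. In the noise‑free regime considered here, \eqref{eq:StateEvolution} gives $x_T = A_{(T-1:0)} x_0$, so it suffices to produce a constant $c>0$, independent of $x_0$ and $T$, and a scalar $\mu>0$ such that $\|A_{(T-1:0)}\|_2 \leqslant c\,\mu^{T}$ for every $T$ (and, by the same argument applied at intermediate times, for every $A_{(t-1:0)}$). Then $\mu\leqslant 1$ yields $\|x_t\|_2\leqslant c\|x_0\|_2$ for all $t$, i.e., global marginal stability, and $\mu<1$ additionally forces $x_t\to 0$, i.e., global asymptotic stability.

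First I would regroup the product $A_{(T-1:0)}$ along the maximal time intervals on which $w_t$ is constant: if $P\triangleq N_w(0,T)$ is the number of switches in $[0,T)$, there are $P+1$ such blocks, and the block of a length‑$\ell$ interval of mode $v$ equals $A_v^{\ell}$, which satisfies $\|A_v^{\ell}\|_2\leqslant C_v\lambda_1^{\ell}\leqslant C\lambda_1^{\ell}$ when $v\in[s_0]$ and $\|A_v^{\ell}\|_2\leqslant C_v\lambda_2^{\ell}\leqslant C\lambda_2^{\ell}$ otherwise. By submultiplicativity of $\|\cdot\|_2$ this gives
\begin{equation*}
\|A_{(T-1:0)}\|_2 \;\leqslant\; C^{\,P+1}\,\lambda_1^{K^-(0,T)}\,\lambda_2^{K^+(0,T)} .
\end{equation*}

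Next I would control the last two factors separately over the length‑$h$ windows $[(j-1)h,jh)$. The choice $r=(\ln\lambda_2-\ln\lambda^\star)/(\ln\lambda^\star-\ln\lambda_1)$ is precisely the one for which $K^-\ln\lambda_1+K^+\ln\lambda_2\leqslant (K^-+K^+)\ln\lambda^\star$ on any single window, so \eqref{eq:SwitchCons} yields $\lambda_1^{K^-(0,T)}\lambda_2^{K^+(0,T)}\leqslant c_1(\lambda^\star)^{T}$, where $c_1$ absorbs the at‑most‑one trailing window of length $<h$ (bounded crudely by $\lambda_2^{h}$, using $\lambda_1<1\leqslant\lambda_2$) and depends only on $h,\lambda_2,\lambda^\star$. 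Summing \eqref{eq:AverageDwellTimeNew} over the windows gives $P\leqslant(\floor{T/h}+1)(N_0+h/\tau_\mathrm{a})$, hence $C^{\,P+1}\leqslant c_2\exp\!\big((\tfrac{N_0\ln C}{h}+\tfrac{\ln C}{\tau_\mathrm{a}})T\big)$ with $c_2$ again $T$‑independent. Combining the three estimates gives $\|A_{(T-1:0)}\|_2\leqslant c\,\mu^{T}$ with
\begin{equation*}
\ln\mu \;=\; \frac{N_0\ln C}{h}+\frac{\ln C}{\tau_\mathrm{a}}+\ln\lambda^\star .
\end{equation*}

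It then remains to inspect the sign of $\ln\mu$. If $C=1$ the first two terms vanish and $\mu=\lambda^\star$, so $\lambda^\star\leqslant 1$ gives marginal and $\lambda^\star<1$ gives asymptotic stability, uniformly in $N_0,\tau_\mathrm{a},\lambda,h$: this is the first bullet. If $C>1$ with $\lambda_1<\lambda^\star<\lambda<1$, the hypothesis $N_0\ln C\leqslant -h\ln\lambda$ gives $N_0\ln C/h\leqslant -\ln\lambda<-\ln\lambda^\star$, so $-\ln\lambda^\star-N_0\ln C/h>0$; taking $\tau_\mathrm{a}^\star\geqslant \ln C\,/\,(-\ln\lambda^\star-N_0\ln C/h)$ makes $\ln\mu\leqslant 0$, i.e., marginal stability for every $w\in\mathcal{S}(\tau_\mathrm{a}^\star,N_0,\lambda,\lambda^\star,h)$, while the strict version of the $N_0$‑budget leaves room to take $\tau_\mathrm{a}^\star$ strictly larger so that $\ln\mu<0$, giving asymptotic stability; this is the second bullet. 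The step that really matters is this sign computation: it is exactly the budget $N_0\ln C\leqslant -h\ln\lambda$ together with the window constraints \eqref{eq:SwitchCons} and \eqref{eq:AverageDwellTimeNew} that guarantees the growth $C^{\,P+1}$ contributed by consecutive switches and by unstable modes can be out‑run by the decay rate $\lambda^\star$ of the stable part; the handling of the trailing partial window and of intermediate time steps is routine bookkeeping. The argument is a window‑wise adaptation of the proof of \cite[Theorem 3]{zhai2002qualitative}.
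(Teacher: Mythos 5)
Your argument is correct and follows essentially the same route as the paper's proof: the per-window consequence of \eqref{eq:SwitchCons} (converting $\lambda_1^{K^-}\lambda_2^{K^+}$ into $(\lambda^\star)^h$), the switch-count bound from \eqref{eq:AverageDwellTimeNew}, a crude bound on the trailing partial window, and the sign computation under the budget $N_0\ln C\leqslant -h\ln\lambda$ are exactly the ingredients of the paper's window-wise adaptation of \cite[Theorem 3]{zhai2002qualitative}. The only cosmetic difference is that you aggregate everything into a single geometric rate $\mu$ and exhibit a threshold $\tau_\mathrm{a}^\star \geqslant \ln C/(-\ln\lambda^\star - N_0\ln C/h)$, whereas the paper fixes the specific value $\tau_\mathrm{a}^\star=\ln C/(\ln\lambda-\ln\lambda^\star)$ in \eqref{eq:taustar} so that each full-window factor satisfies $C^{N_w}(\lambda^\star)^h\leqslant C^{N_0}\lambda^h\leqslant 1$ (a value reused later in Theorem~\ref{thm:Average}); both choices establish the stated existence claim.
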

\begin{proof}
Recall $C$ defined in \eqref{eq:DefC}, and thus \eqref{eq:SwitchCons} implies
\begin{equation}
\|A_{(jh-1:(j-1)h  )}\|_2 \leqslant C^{N_w((j-1)h,jh)} (\lambda^\star)^h, \label{eq:ProofStability1}
\end{equation}
for any positive integer $j$, and thus the first statement holds trivially. If $C>1$, then following a reasoning similar to the proof of \cite[Theorem 3]{zhai2002qualitative}, there exists \begin{equation}
\tau_\mathrm{a}^\star = \ln(C)/(\ln \lambda - \ln \lambda^\star), \label{eq:taustar} \end{equation}
such that given $N_0 \leqslant -h \ln(\lambda)/ \ln(C)$, \eqref{eq:AverageDwellTimeNew} implies
\begin{equation} \label{eq:ProofStability2}
C^{N_w((j-1)h,jh)} (\lambda^\star)^h \leqslant C^{N_0} \lambda^h \leqslant 1. 
\end{equation}
The above last inequality is strict if $N_0 <  -h \ln(\lambda)/ \ln(C) $. 

Then for any $t$, there exists a non-negative integer $b<h$ such that $t = \floor{t/h}h+b$. Based on \eqref{eq:ProofStability1} and \eqref{eq:ProofStability2}, we have $\|x_t\|_2 \leqslant$
\begin{align*} 
&\|A_{(t-1: t-b)}\|_2 \| A_{(\floor{t/h}h-1: (\floor{t/h}-1)h )}\|_2     \cdots  \|A_{(h-1:0)}\|_2\|x_0\|_2        \\
& \leqslant \|x_0\|_2   (C\lambda_2)^{h-1}   \prod_{j=1}^{\floor{t/h}} \Big[ (C)^{N_{w}((j-1)h,jh)} ( \lambda^\star)^{h}  \Big] 
\end{align*}
Therefore, if $N_0 \leqslant -h \ln(\lambda)/ \ln(C)$, we have \eqref{eq:ProofStability2} and thus, the system is marginally stable. If $N_0 < -h \ln(\lambda)/ \ln(C)$, the system is then asymptotically stable. 
\end{proof}
In the above result, the case with $C=1$ covers the situation where all the modes $A_i$ are diagonal matrices; when $C>1$, stability is achieved by upper bounding $N_0$, while in \cite{zhai2002qualitative} $N_0$ can be chosen arbitrarily. However, the upper bound $-h \ln(\lambda)/ \ln(C)$ is not restrictive if $h$ is sufficiently large. 

Finally, with the considered class of switching signals, an upper bound for $\lambda_{\mathrm{max}}(\Gamma_T)$ can be obtained.

\begin{thm}\label{thm:Average}
Given a switched system with a switching signal in $\mathcal{S}(\tau_\mathrm{a},N_0,\lambda,\lambda^\star,h)$, if it satisfies either (i) $C=1$ and $\lambda^\star \leqslant 1$, or (ii) $C>1$, $\lambda \in (\lambda_1,1)$, $\lambda^\star \in (\lambda_1,\lambda)$. $N_0 \ln(C) \leqslant -h \ln(\lambda)$ and $\tau_\mathrm{a} = \tau^\star_\mathrm{a}$ defined in \eqref{eq:taustar}, then it holds that
\begin{equation} \label{eq:UpperBoundAverage}
\lambda_{\mathrm{max}}(\Gamma_T) \leqslant  g(k_0) + g(h)f(k_0)^2 \floor{T/h},
\end{equation}
where the function $g$ is defined in \eqref{eq:FunctionG}, the function $f$ is defined in \eqref{eq:FunctionF}, and $k_0 = T - h\floor{T/h}$.
\end{thm}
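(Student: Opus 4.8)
The plan is to start from $\lambda_{\mathrm{max}}(\Gamma_T)\le \sum_{i=0}^{T-1}\|A_{(T-1:T-i)}\|_2^2$ and to slice the interval $[0,T)$ into the $m\triangleq\floor{T/h}$ full blocks $[(j-1)h,jh)$, $j=1,\dots,m$, plus the trailing partial block $[mh,T)$ of length $k_0=T-mh$, then to track how a product $A_{(T-1:T-i)}=A_{(T-1)}\cdots A_{(T-i)}$ grows as it sweeps across these blocks. Two ingredients are needed. First, a \emph{per-block contraction}: for any full block, exactly as in the proof of the preceding lemma, the ratio constraint \eqref{eq:SwitchCons} gives $\|A_{(jh-1:(j-1)h)}\|_2\le C^{N_w((j-1)h,jh)}(\lambda^\star)^h$ as in \eqref{eq:ProofStability1}; combining this with either $C=1,\ \lambda^\star\le 1$ in case (i), or with the average-dwell-time bound \eqref{eq:AverageDwellTimeNew} for $\tau_\mathrm{a}=\tau_\mathrm{a}^\star$ from \eqref{eq:taustar} together with $N_0\ln C\le -h\ln\lambda$ in case (ii), which yields \eqref{eq:ProofStability2}, shows that each full-block product has operator norm at most $1$. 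Second, a \emph{within-window worst-case bound}: any window of length $\ell\le h$ contains at most $\bar N_w$ switches, hence is a concatenation of at most $\bar N_w+1$ single-mode runs, and a run of length $p$ has norm at most $C\lambda_2^{p}$ (using $\lambda_2\ge 1>\lambda_1$ and $\|A_v^p\|_2\le C_v\lambda_1^p$ or $C_v\lambda_2^p$); hence $\|A_{(t-1:t-\ell)}\|_2\le C^{\bar N_w+1}\lambda_2^{\ell}=:f(\ell)$ for every such sub-window, and $g(k)\triangleq\sum_{\ell=0}^{k}f(\ell)^2$ then bounds the partial Gramian accumulated inside any window of length $k\le h$ (these are the functions $f$ and $g$ in the statement, with $f(0)=g(0)=1$).

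Given these, fix $i$ and let $B_{m-\jmath}$ be the block containing time $T-i$. Cutting $A_{(T-1:T-i)}$ at the block boundaries produces: a head $A_{(T-1:mh)}$ contained in the partial block, hence of norm $\le f(k_0)$; then $\jmath-1$ interior full-block products, each of norm $\le 1$ by the first ingredient; and a tail that is a suffix of $B_{m-\jmath}$ of some length $\ell\in\{1,\dots,h\}$, of norm $\le f(\ell)$. For $i\le k_0$ the product stays inside the partial block and these terms sum to at most $g(k_0)$. For $i>k_0$ we have $\|A_{(T-1:T-i)}\|_2^2\le f(k_0)^2 f(\ell)^2$; summing first over the (at most $h$) values of $i$ whose tail lies in a fixed block $B_{m-\jmath}$ gives $\le f(k_0)^2 g(h)$, and then over $\jmath=1,\dots,m$ gives $\le f(k_0)^2 g(h)\floor{T/h}$. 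Adding the two parts yields $\lambda_{\mathrm{max}}(\Gamma_T)\le g(k_0)+g(h)f(k_0)^2\floor{T/h}$, i.e.\ \eqref{eq:UpperBoundAverage}; the degenerate case $k_0=0$ follows from the conventions on $f(0),g(0)$.

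The main obstacle is the second ingredient and the associated bookkeeping. The ratio constraint \eqref{eq:SwitchCons} is available only for a \emph{complete} block, so every incomplete window — the trailing block of length $k_0$ and the tail suffix of each block — can only be controlled by the crude estimate $f(\ell)=C^{\bar N_w+1}\lambda_2^{\ell}$, which is exactly the source of the factor $f(k_0)^2$ that grows like $\lambda_2^{2k_0}$ when unstable modes are present and underlies the paper's qualitative message. Care is needed to verify that every sub-window appearing in the decomposition is contained in a single block, so that its switch count is indeed $\le\bar N_w$, and that the telescoping cut is taken exactly at multiples of $h$, so that each interior factor is a genuine full-block product to which the first ingredient applies; it is also worth recording at the outset that $\sum_{\ell}f(\ell)^2$ is a closed-form geometric sum, so that $g(h)$ and $g(k_0)$ are explicit, and that when $\lambda_2=1$ (e.g.\ $C=1$) one recovers the mild $g(k)=\mathcal{O}(k)$ growth matching the minimum-dwell-time case.
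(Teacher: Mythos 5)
Your decomposition is exactly the paper's: split $[0,T)$ into $\floor{T/h}$ full blocks plus a trailing window of length $k_0$, use the full-block contraction $\|A_{(jh-1:(j-1)h)}\|_2\leqslant C^{N_w((j-1)h,jh)}(\lambda^\star)^h\leqslant a\leqslant 1$ (via \eqref{eq:ProofStability1}--\eqref{eq:ProofStability2} in cases (i)/(ii)), bound the head by $f(k_0)$ and each block's tail contributions by $g(h)$, and sum over the $\floor{T/h}$ blocks. That skeleton is sound and matches \eqref{eq:FinalBoundAverage}.

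However, there is a genuine shortfall in your second ingredient, and it is precisely where the content of the stated $f$ and $g$ lies. You claim that the ratio constraint \eqref{eq:SwitchCons} ``is available only for a complete block,'' so that incomplete windows can only be handled by the crude bound $f(\ell)=C^{\bar N_w+1}\lambda_2^{\ell}$. This is not correct: the unstable time steps of any sub-window (prefix or suffix) of a block $[(j-1)h,jh)$ are a subset of that block's unstable steps, and \eqref{eq:SwitchCons} caps the latter at $\bar K^+=\floor{h/(1+r)}$. Hence a window of length $i$ inside a block can contain at most $\min\{i,\bar K^+\}$ unstable steps, and the remaining $i-\bar K^+$ steps (when $i>\bar K^+$) are stable, which yields the piecewise bound in \eqref{eq:FunctionF}: $C^{\bar N_w}\lambda_2^{i}$ for $i\leqslant \bar K^+$ but $C^{\bar N_w}\lambda_2^{\bar K^+}\lambda_1^{i-\bar K^+}$ for $i>\bar K^+$, with $f(h)=a$; the function $g$ in \eqref{eq:FunctionG} is the corresponding partial sum of squares. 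Your $f$ and $g$ strictly dominate these, so what you prove is a bound of the same form $g(k_0)+g(h)f(k_0)^2\floor{T/h}$ but with weaker (larger) $f$ and $g$ than those referenced in the statement: your $g(h)$ grows like $\lambda_2^{2h}$ rather than $\lambda_2^{2\bar K^+}$ (followed by $\lambda_1$-decay), which loses the quantitative dependence on $r$, i.e., on how many unstable steps the switching class actually permits per window. To close the gap, replace your crude per-window estimate by the observation above; the rest of your bookkeeping (head, interior full blocks, per-block tails, the $k_0=0$ degenerate case, and the exponent of $C$ up to the minor $\bar N_w$ versus $\bar N_w+1$ counting of runs) goes through unchanged.
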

\begin{proof}
We first consider the time step interval $[(j-1)h,jh)$, for every positive integer $j$. Let $
\bar{K}^+ \triangleq \floor{h/(1+r)}$
be the maximum allowable number of unstable modes in $[(j-1)h,jh)$ according to \eqref{eq:SwitchCons}. Then for $i=0, 1,\cdots, h$, we have $ \|A_{(jh-1:jh-i)}\|_2 \leqslant $
\begin{align} \label{eq:FunctionF}
f(i) \triangleq \begin{cases}
C^{\bar{N}_w} \lambda_2^i & \text{if $1 \leqslant i \leqslant \bar{K}^+$}  \\
 C^{\bar{N}_w} \lambda_2^{\bar{K}^+} \lambda_1^{i-\bar{K}^+} & \text{if $h>i > \bar{K}^+$}  \\
 1 & \text{if $i = 0$} \\
 a & \text{if $i = h$} 
 \end{cases},
\end{align}
where $a= (\lambda^\star)^h$ in case (i) based on \eqref{eq:ProofStability1}, or $a = C^{N_0} \lambda^h$ in case (ii) due to \eqref{eq:ProofStability2}. Then it holds that $\sum_{i=0}^{b-1}  \|A_{(jh-1:jh-i)}\|_2^2 \leqslant  g(b) \triangleq$
\begin{align}
1+  C^{2\bar{N}_w} \Big[\sum_{j=1}^{\mathrm{min}\{b-1, \bar{K}^+\}} \lambda_2^{2j} +  \lambda_2^{2 \bar{K}^+} \sum_{k=\bar{K}^+ +1}^{\mathrm{min}\{b-1,h-1\}} \lambda_1^{2(k- \bar{K}^+  )} \Big], \label{eq:FunctionG}
\end{align}
where $b \in  \{1,2,\dots,h\}$, and $g(0) \triangleq 0$. Then for any $T$, there exists a $k_0<h$ such that $T = \floor{T/h}h+k_0$. If $i > k_0$, based on \eqref{eq:ProofStability2} we have
$\|A_{(T-1:T-i)}\|_2 \leqslant 
a^{\floor{(i-k_0)/h}} \|A_{(T-1:T-k_0)}\|_2  \|A_{(jh-1:jh-l)}\|_2, $
for some positive integer $j$ and some $l \in \{0,\dots,h-1\}$. Therefore, it holds  $\lambda_{\mathrm{max}}(\Gamma_T) \leqslant$
\begin{align}
&\sum_{i=0}^{T-1} \|A_{(T-1:T-i)}\|_2^2 \leqslant  \sum_{j=0}^{k_0-1}\|A_{(T-1:T-j)}\|_2^2 \nonumber \\
& + \|A_{(T-1:T-k_0)}\|_2^2 \Big(  \sum_{j=0}^{\floor{T/h}-1} a^{2j} \sum_{l=0}^{h-1} \|A_{(k_j h-1:k_j h-l)}\|_2^2   \Big) \nonumber \\
& \leqslant g(k_0) +g(h) f(k_0)^2 \sum_{j=0}^{\floor{T/h}-1} a^{2j}, \label{eq:FinalBoundAverage}
\end{align}
where $k_j=1,\cdots,\floor{T/h}$. The result is obtained from $a= (\lambda^\star)^h \leqslant 1$ in case (i) and $a = C^{N_0} \lambda^h \leqslant 1$ in case (ii).
\end{proof}
The bound of the above result shows $\lambda_{\mathrm{max}}(\Gamma_T) = \mathcal{O}(T)$ for marginally stable systems. When the system is asymptotically stable, a bound can be obtained by exploiting \eqref{eq:FinalBoundAverage} and $a<1$. To better interpret the upper bound, we consider the special case where $h$ is a factor of $T$.
\begin{coro}
In the setting of Theorem~\ref{thm:Average}, if $T-h\floor{T/h}=0$ also satisfies, then it holds that $\lambda_{\mathrm{max}}(\Gamma_T) \leqslant \floor{T/h} \times$
\begin{align*}
 \bigg[&  1 +   C^{2\bar{N}_w} \bigg( \frac{ \lambda_2^{2(\bar{K}^++1)} -\lambda_2^2}{\lambda_2^2 - 1 } 
+\lambda_2^{2 \bar{K}^+} \frac{\lambda_1^2- \lambda_1^{2(h- \bar{K}^+)} }{1- \lambda_1^2} \bigg) \bigg] ,
\end{align*}
where $\bar{K}^+ \triangleq \floor{h/(1+r)}$. \hfill $\blacksquare$
\end{coro}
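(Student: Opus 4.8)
The plan is to specialize Theorem~\ref{thm:Average} to the case $k_0 \triangleq T - h\floor{T/h} = 0$ and then evaluate the function $g$ from \eqref{eq:FunctionG} at $b = h$ in closed form. First I would note that for $k_0 = 0$ the bound \eqref{eq:UpperBoundAverage} collapses: by the convention $g(0) \triangleq 0$ in \eqref{eq:FunctionG} and the value $f(0) = 1$ in \eqref{eq:FunctionF}, the right-hand side of \eqref{eq:UpperBoundAverage} becomes $g(h)\floor{T/h}$, so the whole claim reduces to computing $g(h)$.

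Second, I would simplify the two $\min$ operators appearing in \eqref{eq:FunctionG} when $b = h$. Since $\lambda_1 < \lambda^\star \leqslant \lambda < \lambda_2$, the quantity $r = (\ln\lambda_2 - \ln\lambda^\star)/(\ln\lambda^\star - \ln\lambda_1)$ is strictly positive, so $\bar{K}^+ = \floor{h/(1+r)} \leqslant h-1$. Therefore $\min\{h-1,\bar{K}^+\} = \bar{K}^+$ and $\min\{h-1,h-1\} = h-1$, which gives
\begin{equation*}
g(h) = 1 + C^{2\bar{N}_w}\Big[\sum_{j=1}^{\bar{K}^+}\lambda_2^{2j} + \lambda_2^{2\bar{K}^+}\sum_{k=\bar{K}^++1}^{h-1}\lambda_1^{2(k-\bar{K}^+)}\Big].
\end{equation*}

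Third, I would evaluate the two geometric sums. The first one gives $\sum_{j=1}^{\bar{K}^+}\lambda_2^{2j} = (\lambda_2^{2(\bar{K}^++1)} - \lambda_2^2)/(\lambda_2^2 - 1)$, which correctly reduces to $0$ when $\bar{K}^+ = 0$; here the degenerate case $\lambda_2 = 1$ is tacitly excluded by the form of the claimed bound. For the second one, reindexing by $\ell = k - \bar{K}^+$ gives $\sum_{k=\bar{K}^++1}^{h-1}\lambda_1^{2(k-\bar{K}^+)} = \sum_{\ell=1}^{h-1-\bar{K}^+}\lambda_1^{2\ell} = (\lambda_1^2 - \lambda_1^{2(h-\bar{K}^+)})/(1-\lambda_1^2)$. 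Substituting both expressions into the formula for $g(h)$ and multiplying by $\floor{T/h}$ reproduces exactly the stated bound.

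The argument is essentially mechanical once Theorem~\ref{thm:Average} is available, so I do not anticipate a genuine obstacle; the only points deserving a line of care are the elementary inequality $\bar{K}^+ \leqslant h-1$ (which is precisely what lets the $\min$'s in \eqref{eq:FunctionG} be dropped) and the two geometric-series evaluations above, together with the implicit assumption $\lambda_2 \neq 1$.
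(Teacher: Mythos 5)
Your proposal is correct and matches the derivation the paper intends (the corollary is stated without proof as a direct specialization of Theorem~\ref{thm:Average}): with $k_0=0$ you rightly use $g(0)=0$, $f(0)=1$, drop the $\min$'s via $\bar{K}^+\leqslant h-1$ (since $r>0$), and evaluate the two geometric sums, noting the tacit $\lambda_2>1$. Nothing further is needed.
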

Given $h$, the above bound increases if more switches are allowed, i.e., a larger $\bar{N}_w$, or if more unstable modes can be active, i.e., a larger $\bar{K}^+$. In addition, the bound admits an exponential growth rate in $h$: Let $C=1$, then $\lambda_{\mathrm{max}}(\Gamma_T)$ is $\mathcal{O}(\floor{T/h} \lambda_2^{k_1 h}) $ for some positive constant $k_1$, and if $T \geqslant h$ holds, this bound increases exponentially as $h$ increases. This exponential increase is due to that a larger time interval $h$ in Definition~\ref{def:SwitchingAverage} allows the unstable modes to be active continuously for a longer period, and it further leads to a linear dependence on $\sqrt{h}$ in the estimation error bound \eqref{eq:PreBound2}, in contrast to the logarithmic dependence on the switching parameters in Corollaries~\ref{coro:ArbitrarySwitch} and \ref{coro:MinimumDwell}. Therefore, $h$ should be limited to avoid the potential significant increase of the estimation error.

With the above bounds, a bound for the LS estimation error can be obtained by combining \eqref{eq:UpperBoundAverage} and \eqref{eq:PreBound2}. A numerical example is shown in Fig.~\ref{fig:exam0}, and the data is generated by a two-mode system which contains $A_1=\mathrm{diag}(0.5,0.5)$, $A_2=\mathrm{diag}(2,2)$ and satisfies the case (i) in Theorem~\ref{thm:Average}.
\begin{figure}[h]
\begin{minipage}{0.5\textwidth}
\hspace*{-0.7cm}
\centering
\includegraphics[scale=0.19]{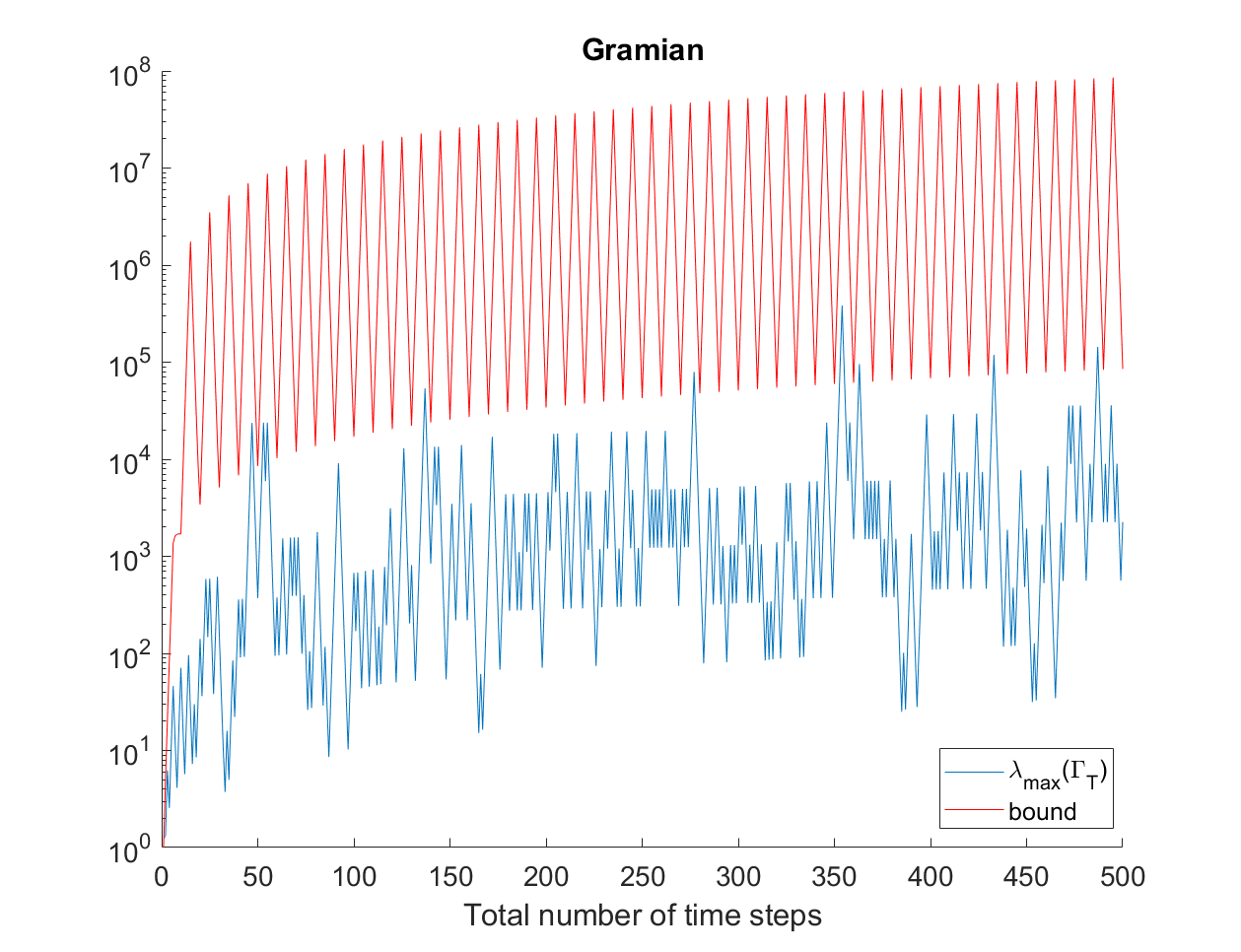}
\\(a)
\end{minipage}
\begin{minipage}{0.5\textwidth}
\hspace*{-0.3cm}
\centering
\includegraphics[scale=0.2]{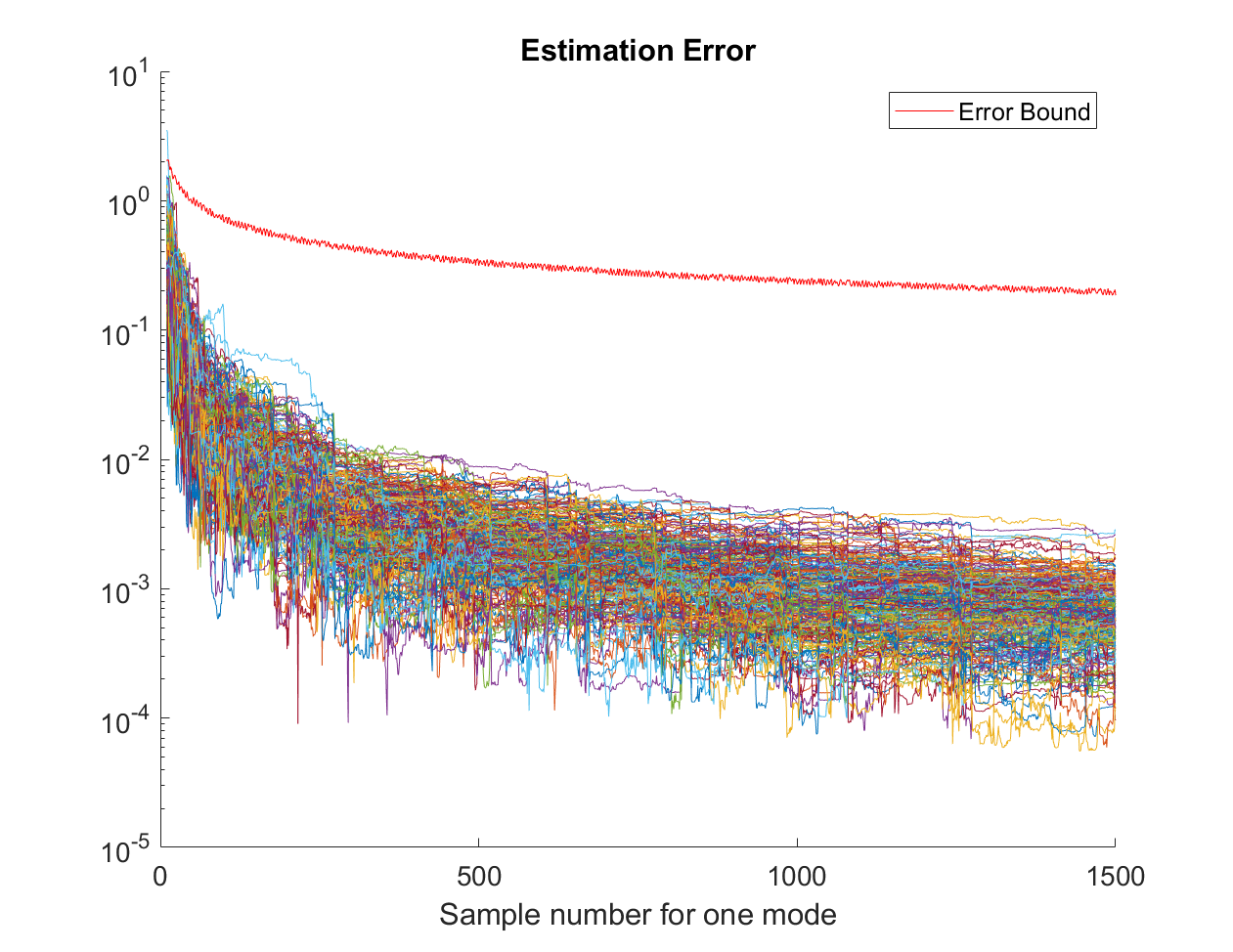}
\\(b)
\end{minipage}
\caption{In (a), the bound \eqref{eq:UpperBoundAverage} is compared with $\lambda_{\mathrm{max}}(\Gamma_T)$, which shows that the bound correctly capture the behavior of $\lambda_{\mathrm{max}}(\Gamma_T)$; the resulting estimation error bound for the unstable mode is shown in (b) and is compared to $300$ error trajectories resulting from $300$ noise realizations. The bound captures the behavior of the error up to a constant, i.e., its magnitude is conservative but captures the decay trend of the error.}
\label{fig:exam0}
\end{figure}

\section{CONCLUSIONS}
Finite-sample error bounds are developed for the LS estimation of switched systems, such that the bounds capture the effect of the parameters of the switching strategies. It is shown that when there are only stable modes, the bound is logarithmic of the switching parameters; however, the presence of unstable modes leads to a linear increase of the error bound as the change of the switching parameter. This suggests that when there are unstable modes, the switching signal should be properly designed to avoid a significant increase in the estimation error. While the developed theoretical error bounds are conservative as they concern the worst-case estimation error under the considered classes of switching signals, they reveal how the estimation error scales with the sample size and the important parameters of the switched systems. Future work includes the application of the developed bounds to analyze the sample complexity of hybrid controllers \cite{matni2019self}, the development of less conservative bounds for practical applications, and the consideration of output measurements and unmeasured switching signal. 


\addtolength{\textheight}{-12cm}  



\bibliographystyle{IEEEtran} 
\bibliography{LibraryFiniteSample.bib}

\end{document}